\newtheorem{definition}{Definition}
\newtheorem{theorem}{Theorem}
\newtheorem{lemma}[theorem]{Lemma}
\newtheorem{corollary}[theorem]{Corollary}
\newcommand{\ignore}[1]{}
\newcommand{\remove}[1]{}
\newcommand{\B}{\vspace*{-\smallskipamount}}
\title{ A Paradigm for Channel Assignment  and Data Migration  in Distributed Systems}
\author{\textsc{Chadi Kari}\\ \\ 
\\Technical Report
\\
\\
\\
Computer Science and Engineering Department \\ University of Connecticut\\
2010
\date{}
\\}
\date{\today}               
\begin{document}
\maketitle

\begin{abstract}
In this manuscript, we consider the problems of channel assignment in wireless networks and data migration in heterogeneous storage systems. We show that a  \emph{soft edge coloring} approach to both problems  gives rigorous approximation guarantees. 

In the channel assignment problem arising in wireless networks, we are given a graph $G = (V, E)$, and the number of wireless cards $C_v$ for each vertex $v$. The constraint $C_v$ limits the number
of channels that edges incident to $v$ can use. We also have the total number of channels $C_G$ available in the network. For a pair of edges incident to a vertex, they are said to be {\em conflicting}
if the channels assigned to them are the same. Our goal is to assign channels (color edges) so that
the number of conflicts is minimized. In this manuscript we  first study the problem for a homogeneous network where $C_v = k$ and $C_G \ge C_v$ for all nodes $v$.  The problem is NP-hard by a reduction from \textsc{Edge coloring} and
we present two combinatorial algorithms for this case.
The first algorithm is based on a distributed greedy method and 
gives a solution with at most $2(1-\frac{1}{k})|E|$ more conflicts than the optimal solution,
which implies a $(2 - \frac{1}{k})$-approximation. 
We also present a soft edge coloring algorithm that  yields at most $2|V|$ more conflicts than the optimal solution.  The approximation ratio is $1 + \frac{|V|}{|E|}$, which gives a ($1 + o(1)$)-factor for dense graphs.  The algorithm generalizes Vizing's algorithm in the sense that it gives the same result as Vizing's algorithm when $k = \Delta + 1$. Moreover, we show that this approximation result is best possible unless $P = NP$. For the case where $C_v = 1$ or $k$,
we show that the problem is NP-hard even when $C_v = 1$ or $2$, and $C_G = 2$, and present two approximation algorithms. The first algorithm is completely combinatorial and has an approximation ratio of $2-\frac{1}{k}$.  We also develop an SDP-based algorithm, producing a solution with an approximation ratio of $1.122$ for $k = 2$, and $2-\Theta(\frac{\ln k}{k})$ in general.
   
In this manuscript, we also consider  the \emph{ data migration} problem in heterogeneous storage systems.  Large-scale storage systems are crucial components in
data-intensive applications such as search engine clusters,
video-on-demand servers, sensor networks, and grid computing.
A storage server typically consists of a set of storage devices.
In such systems, data layouts may need to be reconfigured over time
for load balancing or in the event of system failure/upgrades.
It is critical to migrate data to their target locations
as quickly as possible to obtain the best performance of the system.
Most of the previous results on data migration assume that each storage
node can perform only one data transfer at a time. A storage node,
however, typically can handle multiple transfers simultaneously and
this can reduce the total migration time significantly.
Moreover, storage devices tend to have heterogeneous capabilities
as devices may be added over time due to storage demand increase.
We consider  the \emph{heterogeneous data migration} problem
where we assume that each storage node has different transfer constraint $c_v$,
representing how many \emph{simultaneous} transfers the node can handle.
We develop algorithms to minimize the data migration time. 
We show that it is possible to find
an optimal migration schedule when all $c_v$'s are even. Furthermore, though
the problem is \textsf{NP}-hard in general, we give an efficient    \emph{soft edge coloring} algorithm
that offers a rigorous $(1 + o(1))$-approximation guarantee.

\end{abstract}

\newpage

\doublespacing

\tableofcontents

\singlespacing

\newpage

\section{Soft Edge Coloring }
\label{sec:soft}

\subsection{Introduction}
In a  multi-radio multi-channel wireless network, simultaneous transmissions from nearby nodes over the same wireless channel may \emph{interfere} with each other and as a result can degrade the performance of the network.
One way to overcome this limitation is to assign independent channels (that can be used without interference) to nearby links of the network. However, the number of independent channels that can be employed is usually limited and insufficient and thus conflicts are bound to happen.\\
\B 
Consider the example shown in Figure~\ref{fig:3node}.
If all links use the same channel for transmissions, only one 
pair of nodes may communicate with each other at a time
due to interferences.  However, if there are three channels available 
and each node has two wireless interface cards (so 
it can use two channels), then we may assign a different channel to each link so that all links can be used at the same time. 

\begin{figure}[h]
\begin{center}
    \centerline{\includegraphics[width=1.4in]{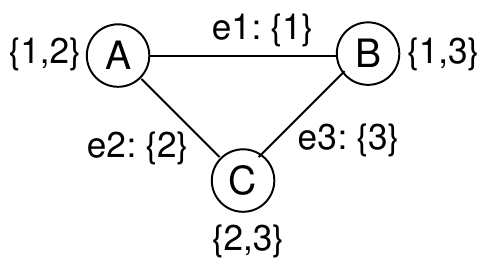}}
    \caption{\small Each node has two wireless interface cards 
(thus can use two different channels) and three channels are available in the network.
	We can assign a distinct channel to each link as shown above so that 
there is no conflict among links.
\label{fig:3node}}
\vspace{-0.3in}
\end{center}
\end{figure}

We informally define the \textsc{Soft edge coloring} for the channel assignment  problem as follows:
We are given a graph $G = (V, E)$, and constraints on the number of 
wireless cards $C_v$ for all $v$. These constraints limit the number
of colors that edges incident to $v$ can use. 
In addition, we have a constraint on the total number
of channels available in the network (denoted as $C_G$).
For a pair of edges incident to a vertex, they are said to be {\em conflicting} 
if the colors assigned to them are the same.
Our goal is to color edges (assign channels) so that
the number of conflicts is minimized while satisfying 
constraints on the number of colors that can be used. 
In section \ref{sec:soft}, we study this problem for homogeneous networks where $C_v = k$
and $C_G \ge C_v$  and for networks  where $C_v = 1$ or $k$ for all nodes $v$. \\

In section \ref{sec:hom} we consider a homogeneous network where $C_v = k$
and $C_G \ge C_v$ for all nodes $v$,  we show 
that the problem is NP-hard for homogeneous networks (section \ref{sec:np}).
We present two combinatorial algorithms in sections \ref{sec:greedy} and \ref{sec:improv} :
First a distributed greedy algorithm that
gives a solution with at most $2(1-\frac{1}{k})|E|$ more conflicts than the optimal solution,
which implies a $(2 - \frac{1}{k})$-approximation. 
The second algorithm  yields at most $2|V|$ more conflicts than the optimal solution. 
The approximation ratio is $1 + \frac{|V|}{|E|}$, which gives a ($1 + o(1)$)-factor for dense graphs. 
The algorithm generalizes Vizing's algorithm in the sense that it gives the same result as Vizing's algorithm when $k = \Delta + 1$. Moreover, we show in section \ref{sec:best} that this approximation result is best possible unless $P = NP$.
\B
For the case where $C_v = 1$ or $k$,
we show in section \ref{sec:hetnp} that the problem is NP-hard even when $C_v = 1$ or $2$, and $C_G = 2$, and present two approximation algorithms.
The first algorithm in section \ref{sec:ext} is completely combinatorial and has an approximation ratio of $2-\frac{1}{k}$. The second is an SDP-based algorithm, producing a solution with an approximation ratio of $1.122$ for $k = 2$, and $2-\Theta(\frac{\ln k}{k})$ in general (section \ref{sec:sdp}).

\subsubsection{Edge coloring}
In the traditional edge coloring problem,
the goal is to find the minimum number 
of colors required to have a proper edge coloring.
The problem 
is $NP$-hard even for cubic graphs \cite{Holyer}.
For a simple graph, a solution using at most $\Delta + 1$ colors
can be found by Vizing's theorem \cite{V64} where
$\Delta$ is the maximum degree of a node.
For multigraphs, there is an approximation algorithm
which uses at most $1.1 \chi' + 0.8$ colors where $\chi'$ is the optimal number of colors 
required \cite{multigraph} (the additive term was improved to $0.7$ by Caprara
 \cite{caprara98improving}). 
Recently, Sanders and Steurer developed an algorithm that
gives a solution with $(1+\epsilon) \chi'+O(1/\epsilon)$ colors \cite{sanders05}.

{\sc Soft edge coloring} is a variant of the {\sc Edge coloring} problem.
In our problem, coloring need not be proper (two adjacent edges
are allowed to use the same color)---the goal is to minimize the
number of such conflicts. 
In addition, each node has its local color constraint, which limits the
number of colors that can be used by the edges incident to the node.
For example, if a node has two wireless cards ($C_v = 2$), 
the node can choose two colors 
and edges incident to the node should use only those two colors.

\subsubsection{Related Work} 
\paragraph{Relationship to \textsc{Min k-partition} and \textsc{Max k-cut}.} 
The {\sc Min k-partition} problem is 
to color vertices with $k$ different colors
so that the total number of conflicts (monochromatic edges) is minimized.
It is the dual of the well-known {\sc Max k-cut} problem \cite{kann:istcs}.
Our problem for homogeneous networks ($C_G = C_v = k$ for all $v$) 
is an edge coloring version of {\sc Min $k$-parition} problem\footnote{
Or it can be considered as {\sc Min $k$-partition} problem  
when the given graph is a line graph where 
the line graph of $G$ has a vertex corresponding to
each edge of $G$, and there is an edge between two vertices
in the line graph if the corresponding edges
are incident on a common vertex in $G$.}.
Kann \cite{kann:istcs} showed that for $k > 2$ and for every $\epsilon > 0$,
there exists a constant $\alpha$ such that the {\sc Min $k$-partition} 
cannot be approximated within a constant factor
unless $P = NP$
\footnote{Their objective function is
slightly different from ours as they do not count self-conflicts. Their inapproximability bound 
of O($|V|^{2 - \epsilon}$) can be extended to the bound of O($|V|^{1-\epsilon}$) in our objective function.}. 
%

\paragraph{Other Related Work.}
Fitzpatrick and Meertens~\cite{softexp} have considered a variant of 
graph coloring problem (called the {\sc Soft graph coloring} problem)
where the objective is to develop a distributed  algorithm for coloring
vertices so that the number of conflicts is minimized.
The algorithm repeatedly recolors vertices to quickly reduce 
the conflicts to an acceptable level.
They have studied experimental performance for regular graphs 
but no theoretical analysis has been provided.   
Damaschke~\cite{softpath} presented a distributed soft coloring algorithm
for special cases such as paths and grids, 
and provided the analysis on the number of conflicts as a function of time $t$. 
In particular, the conflict density on the path is given as $O(1/t)$ 
when two colors are used,
where the conflict density is the number of conflicts divided by $|E|$.

\subsubsection{Problem Definition}
We are given a graph $G=(V, E)$ representing a wireless network, where $v \in V$ represents a node in the wireless network and an edge $e = (u, v) \in E$ represents a communication link
between $u$ and $v$.
Each node $v$ can use $C_v$ different channels and the total number
of channels that can be used in the network is $C_G$. 
More formally, let $E(v)$ be the set of edges incident to $v$ and $c(e)$ be the channel assigned to $e$. 
Then $|\bigcup_{e \in E(v)} \{c(e)\}| \le C_v$ and 
$|\bigcup_{e \in E} \{c(e)\}| \le C_G$.

A pair of edges $e_1$ and $e_2$ in $E(v)$ are said to be conflicting
if the two edges use the same channel.
Let us define the {\em conflict number}, $CF_e (v)$ of an edge 
$e \in E$ at a vertex $v$ to be the number of edges (including $e$) that conflict with $e$ at $v$.
In other words, for an edge $e$ incident to $v$,
$CF_e (v)$ is the number of edges in $E(v)$  that use the same channel as $e$. 
Our goal is to minimize the total number of conflicts. That is, 
\begin{equation} \label{eq:metric} 
CF_G =  \sum_{e = (u, v)\in E}(CF_e (u) + CF_e (v)) . 
\end{equation}
Note that in (\ref{eq:metric}) each conflict is counted twice. We can also define the total number of conflicts as the sum of the squares of the color classes at each node.
That is, let $E_i(v)$ be the set of  edges with color $i$ at node $v$. Then,  
\begin{equation} \label{eq:otherdef} 
CF_G = \sum_{v\in V}\sum_i{|E_i(v)|^2} .
\end{equation}
The two objective functions are equivalent.
Note that the number of conflicts at a vertex $v$, $\sum_i{|E_i(v)|^2}$, 
is minimized locally when edges in $E(v)$ are distributed evenly to each color.
Figure \ref{fig:cfexample} shows a feasible coloring and  the number of conflicts for the given graph.
\begin{figure}[ht]
\begin{center}
    \centerline{\includegraphics[width=4in]{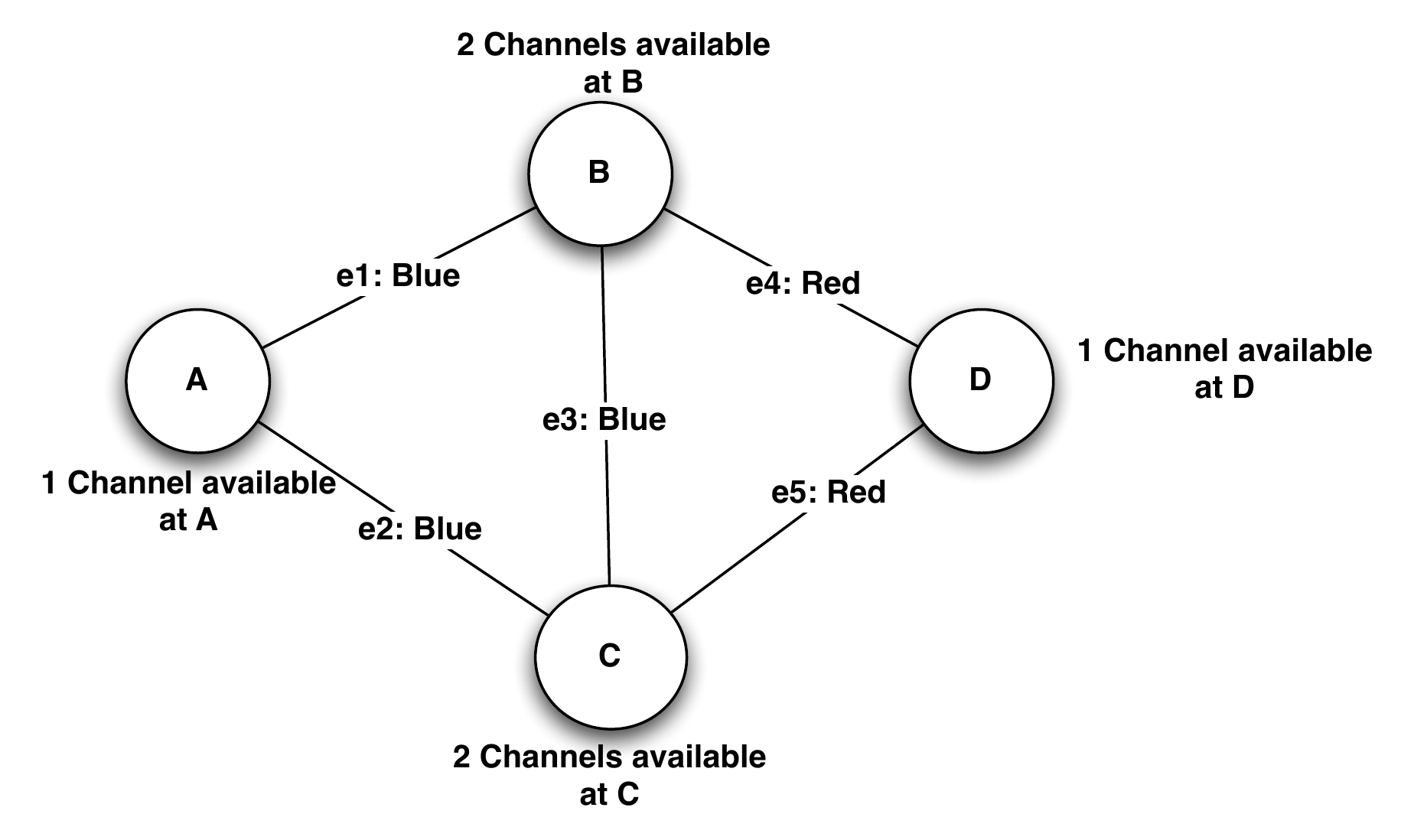}}
    \caption{
$CF(e_1) = 4$ ($2$ conflicts at $A$ and $2$ conflicts at $B$), $CF(e_4) = 3$ ($1$ conflict at $B$ and $2$ conflicts at $D$), $CF(e_2) = CF(e_3) = 4$ and $CF(e_5)=3$.
\newline
Total number of conflicts is $CF_G=18$.
\label{fig:cfexample}}
\end{center}   
\end{figure} 

In sections \ref{sec:hom} and \ref{sec:improv}, we denote {\em channels} by {\em
colors} and use edge coloring and channel assignment, interchangeably.
We also use conflicts and interferences interchangeably.

\subsection{Algorithms for Homogeneous Networks}
\label{sec:hom}

In this section, we consider the case for a homogeneous network where for all nodes v, the number of channels that can be used is the same ($C_v = k$). 
\subsubsection{NP hardness}
\label{sec:np}
For an arbitrary $k$, the problem is NP-hard as the edge coloring problem can be reduced to our problem by setting $k = C_G = \Delta$ where $\Delta$ is the maximum degree of nodes.

\subsubsection{Greedy Algorithm} 
\label{sec:greedy}
The greedy algorithm works as follows:
We choose colors from $\{1, \dots, k\}$ 
(
We only use $k$ colors even for problem instances where $C_G > k$ as $C_G = k$ is the worst case.) 
For any  uncolored edge $e = (u, v)$,
we choose a color for edge $e$ that introduces 
the smallest number of conflicts.
More formally, when we assign a color to $e = (u, v)$,
we count the number of edges in $E(u) \bigcup E(v)$
that are already colored with $c$ (denoted as $n(c, e)$),
and choose color $c$ with the smallest $n(c, e)$, ties are broken arbitrarily. 

\begin{algorithm}[h]
\caption{\bf  Greedy Algorithm}
\label{dgreedy}
\begin{algorithmic}
\FOR {each edge $e = (u, v)$}
	\FOR {each color $i$}
	\STATE compute the number of edges in $E(u)$ and $E(v)$ using color $i$.
	\ENDFOR
	\STATE let $c$ be the color with min $n(i, e)$ for all colors $i$.
	\STATE assign color $c$ to edge $e$.

\ENDFOR
\end{algorithmic}
\end{algorithm}

\begin{theorem}
\label{th:onehop-greedy}
The greedy algorithm yields at most  $2(1 -\frac{1}{k})|E|$ conflicts more than the optimal solution
in homogeneous networks, which implies  a  $(2 - \frac{1}{k})$-approximation.
\end{theorem}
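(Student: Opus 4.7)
The plan is to analyze the greedy algorithm edge by edge, then relate the total cost to two independent lower bounds on $OPT$.

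First I would compute the marginal cost of processing a single edge. Using the formulation $CF_G=\sum_v\sum_i|E_i(v)|^2$, when the greedy assigns color $c$ to $e=(u,v)$ and finds $a$ already-colored edges of color $c$ incident to $u$ (not counting $e$) and $b$ such edges at $v$, the contribution at $u$ jumps from $a^2$ to $(a+1)^2$ and similarly at $v$, so the increase is $2(a+b)+2=2n(c,e)+2$. Since the greedy minimizes $n(c,e)$ over $k$ colors, averaging gives
\[
n(c_e,e)\le\frac{d_u(e)+d_v(e)}{k},
\]
where $d_u(e)$, $d_v(e)$ count already-colored neighbors of $e$ at $u$ and $v$. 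Summing over all edges, and noting that $\sum_e d_u(e)$ telescopes per vertex to $\binom{\deg(v)}{2}$, yields
\[
CF_G^{\textrm{greedy}}\le 2|E|+\frac{1}{k}\sum_v\deg(v)(\deg(v)-1)=2\!\left(1-\tfrac{1}{k}\right)|E|+\frac{1}{k}\sum_v\deg(v)^2.
\]

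Next I need two lower bounds on $OPT$. The first comes from convexity (or Cauchy--Schwarz): for any coloring, $\sum_i|E_i(v)|^2\ge \deg(v)^2/k$ since $\sum_i|E_i(v)|=\deg(v)$ is split into at most $k$ color classes. Summing over $v$ gives $OPT\ge\frac{1}{k}\sum_v\deg(v)^2$, which absorbs the last term above and yields the additive guarantee $CF_G^{\textrm{greedy}}\le OPT+2(1-\tfrac{1}{k})|E|$. The second, much cruder bound is $OPT\ge 2|E|$, which follows directly from the definition since every edge contributes $CF_e(u)+CF_e(v)\ge 1+1=2$ (each edge is counted in its own conflict number at each endpoint).

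Combining these, $2(1-\tfrac{1}{k})|E|\le(1-\tfrac{1}{k})OPT$, and so
\[
CF_G^{\textrm{greedy}}\le OPT+\left(1-\tfrac{1}{k}\right)OPT=\left(2-\tfrac{1}{k}\right)OPT,
\]
delivering both claims of the theorem simultaneously. The only subtlety — and the step most likely to trip up a reader — is recognizing that the additive and multiplicative statements require \emph{different} lower bounds on $OPT$: the tight convexity bound $\frac{1}{k}\sum_v\deg(v)^2$ is needed to cancel the degree-squared term coming out of the greedy analysis, whereas the trivial bound $2|E|$ is what converts the resulting additive slack into a clean $(2-\tfrac{1}{k})$ ratio. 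Everything else is bookkeeping: the averaging argument for the greedy choice and the per-edge marginal cost computation are both direct.
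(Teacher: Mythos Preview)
Your proof is correct and follows essentially the same route as the paper: the paper isolates the same two lemmas --- the convexity lower bound $OPT\ge\frac{1}{k}\sum_v d_v^2$ and the greedy upper bound $CF_G^{\mathrm{greedy}}\le\frac{1}{k}\sum_v d_v^2+2(1-\tfrac{1}{k})|E|$ --- and combines them exactly as you do, with the trivial bound $OPT\ge 2|E|$ converting the additive guarantee into the multiplicative one. Your edge-by-edge marginal cost plus averaging argument is the natural proof of the upper-bound lemma, and your telescoping of $\sum_e(d_u(e)+d_v(e))$ to $\sum_v\binom{d_v}{2}$ is the right bookkeeping step.
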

To prove Theorem \ref{th:onehop-greedy} we need to show the following two lemmas. We first obtain a lowerbound on the optimal solution.

\begin{lemma}
The total number of conflicts when $C_v = k$ for all nodes $v$ in any channel assignment
is at least
$    \sum_{v}\frac{d_v^2} {k} $.
\label{lemma:lower}
\end{lemma}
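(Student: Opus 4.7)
The plan is to exploit the alternative expression for $CF_G$ given in equation (\ref{eq:otherdef}), namely $CF_G = \sum_{v \in V} \sum_i |E_i(v)|^2$, and then apply a convexity / Cauchy-Schwarz argument at each vertex separately. Once we have a pointwise bound of $d_v^2/k$ at every vertex, summing over $V$ yields the lemma.

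First I would fix an arbitrary feasible channel assignment and a vertex $v \in V$. By definition, the sets $E_i(v)$ partition $E(v)$, so $\sum_i |E_i(v)| = d_v$. The constraint $C_v = k$ guarantees that at most $k$ of the classes $E_i(v)$ are nonempty, so the sum $\sum_i |E_i(v)|^2$ can be viewed as a sum of at most $k$ nonnegative terms whose total is $d_v$. By Cauchy-Schwarz (or equivalently the power-mean inequality, or convexity of $x \mapsto x^2$),
\begin{equation*}
\sum_i |E_i(v)|^2 \;\ge\; \frac{\left(\sum_i |E_i(v)|\right)^2}{k} \;=\; \frac{d_v^2}{k},
\end{equation*}
with equality only when the $d_v$ edges are distributed as evenly as possible across the $k$ available colors.

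Summing this inequality over all $v \in V$ and using equation (\ref{eq:otherdef}) immediately gives $CF_G \ge \sum_v d_v^2/k$, as required. Since the feasible assignment was arbitrary, the bound holds in particular for the optimal assignment.

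There is no real obstacle here: the only thing that needs a brief justification is why at each vertex we may treat the sum as containing at most $k$ nonzero terms, and this is exactly what the constraint $C_v = k$ enforces. The remainder of the argument is a one-line application of Cauchy-Schwarz followed by summation over $V$.
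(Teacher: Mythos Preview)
Your argument is correct and is exactly the approach the paper has in mind: the text immediately preceding the objective functions already notes that $\sum_i |E_i(v)|^2$ is minimized locally when the edges at $v$ are spread evenly across the $k$ colors, and your Cauchy--Schwarz step at each vertex followed by summation over $V$ is the natural way to make that remark precise. One tiny quibble: equality in Cauchy--Schwarz requires all $k$ classes to have size exactly $d_v/k$, so ``as evenly as possible'' overstates the equality condition when $k \nmid d_v$, but this does not affect the lower bound itself.
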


The second lemma gives an upperbound on the number of conflicts in our solution.
\begin{lemma}
The total number of conflicts introduced by the greedy algorithm is at most 
$    \sum_{v}\frac{d_v^2} {k} + 2(1 -\frac{1}{k})|E|$.
\label{lemma:upper}
\end{lemma}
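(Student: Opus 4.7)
The plan is to track how much the total conflict count $CF_G = \sum_v \sum_i |E_i(v)|^2$ grows when the greedy algorithm colors one edge at a time, then sum the per-edge increments over the whole edge set.

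First I would analyze a single greedy step. Suppose edge $e=(u,v)$ is the next to be colored, and let $a_c$ (resp.\ $b_c$) denote the number of already-colored edges in $E(u)$ (resp.\ $E(v)$) that currently use color $c$. Choosing color $c$ for $e$ turns $|E_c(u)|$ from $a_c$ into $a_c+1$ and $|E_c(v)|$ from $b_c$ into $b_c+1$, so the increase in $CF_G$ is exactly $(2a_c+1)+(2b_c+1) = 2n(c,e)+2$, where $n(c,e)=a_c+b_c$ is precisely the quantity the greedy rule minimizes. Since $\sum_{c=1}^{k} n(c,e)$ equals the number of already-colored edges in $E(u)\cup E(v)\setminus\{e\}$, i.e. the partial degrees $d_u^{<e}+d_v^{<e}$ of $u$ and $v$ at the moment $e$ is processed, averaging gives
\begin{equation*}
\min_c n(c,e) \;\le\; \frac{d_u^{<e}+d_v^{<e}}{k}.
\end{equation*}
Thus the step contributes at most $\frac{2}{k}(d_u^{<e}+d_v^{<e})+2$ to $CF_G$.

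Next I would sum over all edges. For a fixed vertex $v$, as the edges of $E(v)$ are processed in whatever order the algorithm uses, the values $d_v^{<e}$ range over $0,1,\ldots,d_v-1$, so $\sum_{e\in E(v)} d_v^{<e} = \binom{d_v}{2}$. Double-counting each edge by its two endpoints,
\begin{equation*}
\sum_{e=(u,v)\in E}\bigl(d_u^{<e}+d_v^{<e}\bigr) \;=\; \sum_{v\in V}\binom{d_v}{2} \;=\; \frac{1}{2}\sum_v d_v^2 - |E|.
\end{equation*}
Plugging this into the per-step bound yields
\begin{equation*}
CF_G \;\le\; \frac{2}{k}\Bigl(\tfrac{1}{2}\sum_v d_v^2 - |E|\Bigr) + 2|E| \;=\; \sum_v \frac{d_v^2}{k} + 2\Bigl(1-\frac{1}{k}\Bigr)|E|,
\end{equation*}
which is exactly the claimed inequality.

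I expect the only subtle point to be the bookkeeping that produces the $(1-\tfrac{1}{k})$ factor rather than a plain $1$: one must be careful that $n(c,e)$ counts only \emph{already-colored} neighbors of $e$ (so $e$ itself is excluded), which is what produces the $-|E|$ correction when summing $\binom{d_v}{2}$ and hence shaves $\tfrac{2|E|}{k}$ off the additive term. Everything else is a direct averaging-plus-handshake argument and should go through cleanly.
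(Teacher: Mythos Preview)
Your argument is correct and is exactly the natural incremental-plus-averaging proof one would expect here: track the increment $2n(c,e)+2$ to $\sum_v\sum_i|E_i(v)|^2$ when edge $e$ receives color $c$, bound $\min_c n(c,e)$ by the average $(d_u^{<e}+d_v^{<e})/k$, and then observe that summing the partial degrees $d_v^{<e}$ over the $d_v$ edges at $v$ gives $\binom{d_v}{2}$, which produces the $-|E|$ correction and hence the $(1-\tfrac1k)$ factor. The paper itself states the lemma without supplying a proof, but your derivation matches the intended line of reasoning implicit in the greedy rule (minimize $n(c,e)$) and in the companion lower bound $\sum_v d_v^2/k$; there is no gap.
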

Note that the algorithm can be performed in a distributed manner as each node needs only local information. 

\noindent{\em Remark 1:} we can consider a simple randomized algorithm, in which each edge
chooses its color uniformly at random from $\{1, \dots, k\}$. 
The algorithm gives the same expected approximation guarantee and
it can be easily derandomized using conditional expectations.  \\

\subsubsection{Improved Algorithm}
\label{sec:improv}
In this section, we give an algorithm with an additive factor of 2$|V|$ and an approximation ratio of  $1 + \frac{|V|}{|E|}$. Our algorithm is a generalization of Vizing's algorithm in the sense that
it gives the same result as Vizing's algorithm  when $k = \Delta + 1$ where $\Delta$ is the maximum degree of nodes. 
We first define some notations. For each vertex $v$, let $m_v = \lfloor \frac{d_v}{k} \rfloor$ and $\alpha_v = d_v  -  m_vk $.

Let $|E_i(v)|$ be the size of the color class of color $i$ at vertex $v$ i.e. the number of edges adjacent to $v$ that have color $i$.

\begin{definition}
A color $i$ is called \emph{strong} on a vertex $v$ if $|E_i(v)|= m_v + 1$. A color $i$ is called weak on $v$ 
if $|E_i(v)| =  m_v$ . A color $i$ is called very weak on $v$ if $|E_i(v)| < m_v$.
\end{definition}
\begin{definition}
A vertex $v$ has a \emph{balanced} coloring if the number of strong classes at $v$ is at most $\min(\alpha_v + 1, k-1)$ and no color class in $E(v)$ is larger than $m_v + 1$. A graph $G = (V, E)$ has a balanced coloring if each vertex $v \in V$ has a balanced coloring.   
\end{definition}

The intuition behind the definition of balanced coloring is that the local number of conflicts at a vertex is minimized when
edges are distributed as evenly as possible to each color. We try to achieve the balanced coloring
by not creating too many strong color classes and also allowing at most one more strong color class than the optimal solution.
In the following we present an algorithm that achieves a balanced coloring for a given graph $G = (V, E)$;  we show in Theorem~\ref{approxfactor} that a balanced coloring implies an additive approximation factor of 2$|V|$ in terms of number of conflicts and an approximation ratio of $1 + \frac{|V|}{|E|}$. 

In Algorithm {\sc BalancedColoring}($e$) described below, we color edge $e$ so that the graph has a balanced coloring 
(which may require the recoloring of already colored edges to maintain the balanced coloring),
assuming that it had a balanced coloring before coloring $e$.
We perform {\sc BalanacedColoring} for all edges in arbitrary order.
The following terms are used in the algorithm description.
Let $|S_v|$ denote the number of strong color classes at vertex $v$.
\begin{definition}\label{def:weak}
For vertex $v \in V$ with $|S_v| < \min(\alpha_v + 1, k - 1)$ or with $|S_v| = k - 1$, $i$ is a missing color if $i$ is weak or very weak on $v$.  For vertex $v \in V$ with $|S_v| = \alpha_v + 1$, $i$ is a missing color if $i$ is very weak on $v$
\end{definition}

In Lemma \ref{missing}, we will show that it is safe to use a missing color at a vertex 
for an edge incident to it (i.e., we can maintain the balanced coloring property).

\begin{definition}\label{def:path}
An \emph{$ab$-path} between vertices $u$ and $v$ where $a$ and $b$ are colors, is a path connecting $u$ and $v$ and has the following properties:
\begin{itemize}
\item Edges in the path have alternating colors $a$ and $b$. 
\item Let $e_1 = (u,w_1)$ be the first edge on that path and suppose $e_1$ is colored $a$, then $u$ must be missing $b$ and not missing $a$. 
\item If $v$ is reached by an edge colored $b$ then $v$ must be missing $a$ but not missing $b$, otherwise if $v$ is reached by an edge colored $a$ then $v$ must be missing $b$ and not missing $a$.
\end{itemize}  
\end{definition}
\begin{definition}
A flipping of an $ab$-path is a recoloring of the edges on the path such that edges previously with color $a$ will be recolored with color $b$ and vice versa. 
\end{definition}

Note that an $ab$-path is not necessarily a simple path and may contain a cycle as a vertex can have multiple edges with the same color.
We show that flipping an $ab$-path does not violate the balanced coloring property in Lemma \ref{flipping}.
Algorithm {\sc BalancedColroing} works as follows. \\

\begin{figure}[t]
\begin{center}
    \centerline{\includegraphics{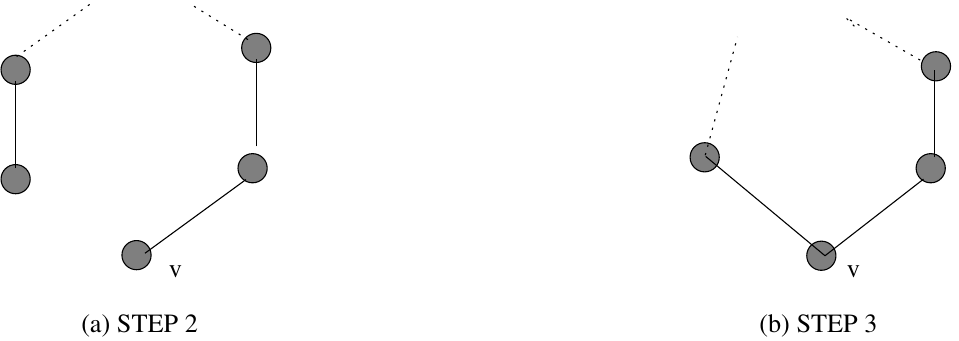}}
    \caption{\small The figures illustrate how recoloring is performed in {\sc BalancedColoring}. The colors beside edges
indicate the original color and the color after recoloring. \vspace{-0.3in}}
 \label{fig:phaseone}
 \end{center}
\end{figure}

\noindent Algorithm {\sc BalancedColoring($e = (v, w)$}) \\
Let $w_1 = w$. At $i$-th round ($i = 1, 2, \dots$), we do the following. \\
\underline{STEP 1:} 
Let $\mathcal{C}_v$ be the set of missing colors on $v$.
If $i=1$, $\mathcal{C}_{w_1}$ is the set of missing colors on $w_1$. 
When $i \ge 2$, ${\mathcal{C}_{w_i}}$ is  the set of missing colors on $w_i$ excluding color $c_{w_{i-1}}$.
($c_{w_{i-1}}$ is defined in STEP 2 at $(i-1)$-th round).
If $\mathcal{C}_v \cap \mathcal{C}_{w_i} \ne \emptyset$, 
then choose a color $c \in \mathcal{C}_v \cap \mathcal{C}_{w_i}$, color edge $(v,w_i)$ with $c$ and terminate. \\
\underline{STEP 2:} If $\mathcal{C}_v \cap \mathcal{C}_{w_i} = \emptyset$, choose $c_v \in \mathcal{C}_v$ and $c_{w_i} \in \mathcal{C}_{w_i}$ ($C_v \neq \emptyset$ and  $\mathcal{C}_{w_1} \neq \emptyset$ by lemma \ref{saturated} ) . Find a $c_vc_{w_i}$-path that starts at $w_i$ and ends at a vertex other than $v$. If such a path exists, flip this path, color edge $(v,w_i)$ with $c_v$ and terminate (Fig. \ref{fig:phaseone} a).\\
\underline{STEP 3:} If all $c_vc_{w_i}$-paths that start at vertex $w_i$ end at $v$, fix one path and let $(v,w_{i+1})$ be the last edge on that path. The edge $(v,w_{i+1})$ must have color $c_{w_i}$ by definition \ref{def:path} . Uncolor it and color edge $(v,w_i)$ with $c_{w_i}$ (Fig. \ref{fig:phaseone} b). Mark edge $(v,w_i)$ as ``used" and go to $(i+1)$-th round and repeat the above steps with edge $(v, w_{i+1})$.

\paragraph{Analysis} In the following, we prove that our algorithm terminates and achieves 
a balanced coloring. First Lemma \ref{saturated} and \ref{lemma:saturated2} show that we can always find a missing color at each round and at Lemma \ref{termin} shows that at some round $j < d_v$,
the algorithm terminates. Due to the choice of missing colors and $ab$-path,
we can show that our algorithm gives a balanced coloring (Lemma \ref{missing} and \ref{flipping} ).

\begin{lemma}\label{saturated}
For a given edge $(v, w_1)$, there is a missing color at $v$ and $w_1$.
That is, $C_v \neq \emptyset$ and  $\mathcal{C}_{w_1} \neq \emptyset$.
\end{lemma}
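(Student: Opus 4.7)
The plan is a direct case analysis on $|S_v|$, exploiting the two invariants provided by the balanced coloring hypothesis: (i) $|S_v| \le \min(\alpha_v+1,k-1)$, and (ii) every color class at $v$ has size at most $m_v+1$. By symmetry the argument for $w_1$ is identical, so I would only write it out for $v$. The key auxiliary observation is that at the moment the algorithm is about to process $(v,w_1)$, this edge is uncolored, so the sum of the colored class sizes satisfies $\sum_i |E_i(v)| \le d_v - 1 = km_v + \alpha_v - 1$.

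First I would handle the ``easy'' situations, namely $|S_v| < \min(\alpha_v+1,k-1)$ and $|S_v| = k-1$. In both of these, $|S_v| \le k-1$, so at least one color $i$ is not strong. Invariant (ii) forces $|E_i(v)| \le m_v$, so $i$ is weak or very weak on $v$; by the first clause of Definition~\ref{def:weak} it is a missing color. This disposes of all cases except the one where $|S_v|=\alpha_v+1$ and the second clause of the definition applies; note that, as a side remark, this case only arises when $\alpha_v+1 \le k-1$, since otherwise balanced coloring already forces $|S_v| \le k-1$ and we are in the previous situation.

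The main (and only mildly non-trivial) step is the remaining case $|S_v| = \alpha_v+1 \le k-1$, where I must exhibit a color that is \emph{very} weak. I would argue by contradiction. Suppose every non-strong color is at least weak, i.e., $|E_i(v)| \ge m_v$ for the $k-\alpha_v-1$ non-strong colors. Combined with invariant (ii) we would then get
\begin{equation*}
\sum_{i=1}^k |E_i(v)| \;\ge\; (\alpha_v+1)(m_v+1) + (k-\alpha_v-1)m_v \;=\; km_v + \alpha_v + 1 \;=\; d_v + 1.
\end{equation*}
But the edge $(v,w_1)$ is uncolored, so $\sum_i |E_i(v)| \le d_v - 1$, a contradiction. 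Hence some non-strong color satisfies $|E_i(v)| < m_v$, i.e., is very weak, and therefore missing at $v$.

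I expect the main subtlety to be bookkeeping rather than ideas: one has to be careful that the balanced-coloring invariant is interpreted with respect to the fixed values $m_v,\alpha_v$ derived from the \emph{original} degree $d_v$, even though the coloring is partial, and that the case $\alpha_v+1 = k-1$ (where both clauses of Definition~\ref{def:weak} could appear to apply) is reconciled correctly. Once the invariants are applied with the right sum bound $d_v-1$ in the critical third case, the conclusion follows from a one-line counting argument, and the same argument applied at $w_1$ yields $\mathcal{C}_{w_1}\neq\emptyset$.
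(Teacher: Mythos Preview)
Your proposal is correct and follows the same case split the paper intends. The paper's \LaTeX\ source does not actually include a written-out proof of this lemma, but the commented \verb|\iffalse| block inside the statement records precisely the two claims your argument establishes: a very weak color exists whenever $|S_v|=\alpha_v+1$, and (with an uncolored incident edge) whenever $|S_v|=k-1$. Your counting contradiction in the critical case $|S_v|=\alpha_v+1$ is the natural one-line argument.

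One small sharpening: in that critical case you invoke $\sum_i |E_i(v)|\le d_v-1$ from the uncolored edge, but in fact $\sum_i |E_i(v)|\le d_v$ already contradicts $\sum_i |E_i(v)|\ge d_v+1$, so the uncolored edge is not needed there; this matches the paper's commented hint, which states that claim without mentioning an uncolored edge. The uncolored edge is, however, genuinely needed in the $|S_v|=k-1$ subcase if one wants the stronger conclusion (recorded in the paper's comment) that the non-strong color is in fact \emph{very} weak, though for the lemma as stated your weaker observation that it is weak-or-very-weak suffices.
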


For $w_i$, $i \ge 2$,  we need to choose a missing color at $w_i$ other than $c_{w_{i-1}}$. We prove in the following lemma
that there is a missing color other than $c_{w_{i-1}}$. 
\begin{lemma}\label{lemma:saturated2}
At $i$-th round ($i \ge 2$), there is a missing color other than $c_{w_{i-1}}$ at $w_i$. 
\end{lemma}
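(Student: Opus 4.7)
The plan is to exploit the structural consequence of Step~3 of round $i-1$. Write $c' := c_{w_{i-1}}$. Since $w_i$ is the penultimate vertex of the fixed $c_vc'$-path, the only change to $w_i$'s local coloring between the start of round $i-1$ and the start of round $i$ is that the edge $(v,w_i)$, previously colored $c'$, becomes uncolored; thus $|E_{c'}(w_i)|$ drops by exactly one while all other classes at $w_i$ are unchanged, and the balanced-coloring invariant is preserved at $w_i$. Moreover, because the $c_vc'$-path ends at $v$ rather than at the intermediate vertex $w_i$, Definition~\ref{def:path} forces $c'$ to not be missing at $w_i$ just before the uncoloring---otherwise the path could have terminated at $w_i \ne v$, contradicting Step~3's premise. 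By Definition~\ref{def:weak}, this leaves two possibilities at $w_i$ before the uncoloring: (A) $c'$ is strong, or (B) $c'$ is weak and $|S_{w_i}| = \alpha_{w_i} + 1$.

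The proof then splits by these cases. In Case~A, the uncoloring drops $c'$ to weak and lowers $|S_{w_i}|$ by one. A short check over the admissible pre-values of $|S_{w_i}|$ (each at most $\min(\alpha_{w_i}+1,\, k-1)$) shows the new $|S_{w_i}|$ lies strictly below $\min(\alpha_{w_i}+1,\, k-1)$, so we land in the first clause of Definition~\ref{def:weak} and every non-strong color is missing; the count $k - |S_{w_i}| \ge 2$ then produces a missing color different from $c'$. In Case~B, the uncoloring drops $c'$ from weak to very weak (of size $m_{w_i} - 1$), while $|S_{w_i}|$ is unchanged at $\alpha_{w_i}+1$, so by Definition~\ref{def:weak} the missing colors at $w_i$ are exactly the very weak ones. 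Letting $q$ denote the number of very weak classes at $w_i$, the identity $\sum_j |E_j(w_i)| = m_{w_i} k + \alpha_{w_i} - 1$ combined with $|S_{w_i}| = \alpha_{w_i}+1$ yields total very weak mass $q\,m_{w_i} - 2$; the bound ``each very weak class has size strictly less than $m_{w_i}$'' then forces $q \le 2$, and $q = 1$ would require the unique very weak class to have size $m_{w_i} - 2$, contradicting $c'$ itself being very weak of size $m_{w_i} - 1$. Hence $q = 2$, providing a missing color distinct from $c'$.

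The hard part will be Case~B: we must show not only that a very weak class exists but that there are at least two, and this rests crucially on knowing the precise post-uncoloring size $m_{w_i}-1$ of $c'$. The intermediate-vertex observation from the first paragraph is what makes the case split tractable---without it, $c'$ could already be very weak at $w_i$ before the uncoloring and the bookkeeping above would fail to close.
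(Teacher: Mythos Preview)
Your approach is sound and the case split is the right one, but there are two gaps that need to be closed.

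First, your key claim that $c'$ is not missing at $w_i$ before the uncoloring does not follow solely from Step~3's premise. For the truncated path from $w_{i-1}$ to $w_i$ to qualify as a $c_vc'$-path under Definition~\ref{def:path}, the terminal condition at $w_i$ (which is reached by a $c_v$-edge) requires \emph{both} that $w_i$ be missing $c'$ \emph{and} that $w_i$ be not missing $c_v$. Thus Step~3 only yields the disjunction: either $c'$ is not missing at $w_i$, or $c_v$ \emph{is} missing at $w_i$. You must handle the second alternative separately. It is easy---since $\mathcal C_v\cap\mathcal C_{w_{i-1}}=\emptyset$ we have $c_v\ne c'$, and uncoloring a $c'$-edge cannot destroy the ``missing'' status of $c_v$ (it leaves $|E_{c_v}(w_i)|$ unchanged and can only lower $|S_{w_i}|$, which keeps us in, or moves us into, the first clause of Definition~\ref{def:weak})---but it must be said, because without it the case split into (A) and (B) is not exhaustive.

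Second, in Case~B your identity $\sum_j |E_j(w_i)| = m_{w_i}k+\alpha_{w_i}-1$ asserts that $(v,w_i)$ is the \emph{only} uncolored edge at $w_i$. That is not justified: edges the algorithm has not yet processed may also be incident to $w_i$. You only know $\sum_j |E_j(w_i)| \le m_{w_i}k+\alpha_{w_i}-1$, and with a mere inequality your detour through ``$q\le 2$'' no longer goes through. Fortunately that detour is unnecessary: you only need $q\ge 2$. From the inequality and $|S_{w_i}|=\alpha_{w_i}+1$ one gets that the total very-weak mass is at most $qm_{w_i}-2$; if $q=1$, the unique very weak class is $c'$ itself, of size exactly $m_{w_i}-1$, forcing $m_{w_i}-1\le m_{w_i}-2$, a contradiction. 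Hence $q\ge 2$, and the lemma follows.
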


\begin{lemma}\label{termin}
At some round $j < d_v$, there exists a $c_vc_{w_j}$-path starting at $w_j$ and not ending at $v$.
\end{lemma}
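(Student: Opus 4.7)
The plan is to prove Lemma \ref{termin} by a counting argument on the edges incident to $v$. The core claim I would establish is that each consecutive execution of STEP 3 identifies a distinct edge $e_{i+1} = (v, w_{i+1})$ incident to $v$; since $v$ has only $d_v$ such edges, STEP 3 can execute consecutively at most $d_v - 1$ times before the algorithm must terminate via STEP 1 or STEP 2 at some round $j < d_v$. When the termination happens via STEP 2, we obtain the $c_v c_{w_j}$-path from $w_j$ not ending at $v$ that the lemma asserts.

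First I would establish by induction on $i$ the following invariant: assuming STEP 3 has executed in rounds $1, \ldots, i$, the edges $e_1, \ldots, e_{i+1}$ are pairwise distinct, and at the start of round $i+1$ the edge $e_{i+1}$ is uncolored while each $e_j$ for $j \leq i$ is colored with $c_{w_j}$ and marked \emph{used}. The inductive step follows directly from the STEP 3 actions: uncoloring $e_{i+1}$, then coloring and marking $e_i$ with $c_{w_i}$. For the distinctness part, I would use the fact that $e_{i+1}$ must currently carry color $c_{w_i}$ (by the alternating structure of the path together with the endpoint conditions of Definition \ref{def:path}); hence $e_{i+1} \neq e_i$ (which is uncolored) and $e_{i+1} \neq e_j$ for $j < i$ unless $c_{w_j} = c_{w_i}$, a coincidence I would rule out using the exclusion rule in $\mathcal{C}_{w_i}$ together with the ``used'' marking that prevents re-selection of an already-recolored edge as the terminal edge of the next path.

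Given the invariant, the counting bound follows immediately: after $d_v - 1$ consecutive rounds of STEP 3 all $d_v$ incident edges would appear in $\{e_1, \ldots, e_{d_v}\}$, leaving no new edge available to serve as $e_{d_v+1}$ in a further STEP 3. Hence at some round $j < d_v$ the algorithm must fall into STEP 1 or STEP 2; in the STEP 2 case, by definition we get the desired $c_v c_{w_j}$-path starting at $w_j$ and not ending at $v$. The main obstacle I anticipate is rigorously ruling out the coincidence $c_{w_i} = c_{w_j}$ for $j < i - 1$, especially in the multigraph setting where $v$ may have several parallel edges to the same neighbor; handling this will require a careful joint use of the ``used'' marking, the exclusion of $c_{w_{i-1}}$ from $\mathcal{C}_{w_i}$, and the alternating-color structure from Definition \ref{def:path} to exclude paths that would terminate at an already-used edge.
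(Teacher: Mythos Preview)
Your overall strategy---count edges incident to $v$, show the $e_i$ are pairwise distinct, hence at most $d_v-1$ consecutive executions of STEP~3---is exactly the paper's, and you correctly isolate the crux: ruling out $c_{w_i}=c_{w_j}$ for $i<j$ (the exclusion rule in $\mathcal{C}_{w_i}$ only handles $j=i-1$).

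The gap is in how you close that crux. You lean on the ``used'' marking to prevent a later path from terminating at an already-recolored edge, but the algorithm as written only \emph{sets} this mark in STEP~3; nothing in the description of STEP~2 or STEP~3 ever \emph{consults} it when selecting a path or its last edge. So the marking cannot, by itself, carry the distinctness argument, and your plan does not say what would replace it.

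The paper's resolution (sketched in the caption of Figure~\ref{fig:lemma12}) is purely structural and does not use the marking at all. Suppose $j$ is minimal with $c_{w_j}=c_{w_i}$ for some $i<j$. By minimality, $e_1,\dots,e_j$ are distinct, so at round $j$ the edge $(v,w_i)$ still carries the color $c_{w_i}=c_{w_j}$ it received in round $i$. Since the only recolorings between rounds $i$ and $j$ occur on the edges $e_i,\dots,e_j$ incident to $v$, any $c_vc_{w_i}$-alternating path $P$ connecting $v$ and $w_j$ through the edge $(v,w_i)$ yields, upon deleting $(v,w_i)$, a $c_vc_{w_i}$-path from $w_i$ to $w_j$ that was already present at round $i$. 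That path starts at $w_i$ and ends at $w_j\neq v$, so STEP~2 would have succeeded at round $i$---contradicting the assumption that STEP~3 was executed there. This alternating-path contradiction, not any bookkeeping on ``used'' edges, is what forces the $c_{w_i}$ to be distinct and hence the $e_i$ to be distinct.
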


\begin{figure}[ht]      
\begin{center}        
  \begin{center}
    \centerline{\includegraphics[width=1.5in]{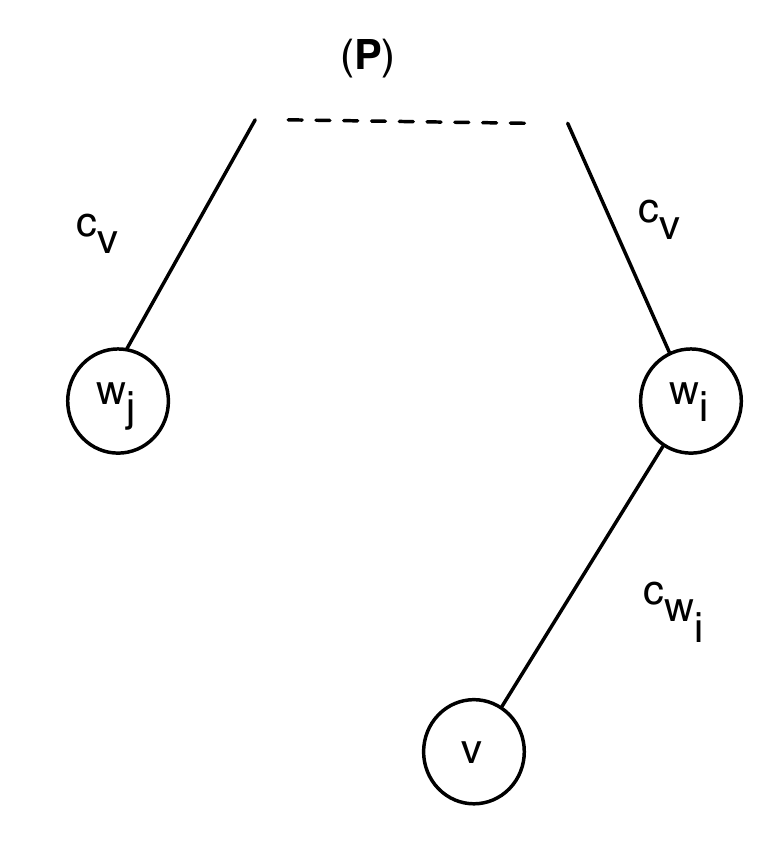}}
    \caption{$c_{w_i} = c_{w_j}, $if the $c_vc_{w_i}$-path $P$ connecting $v$ and $w_j$ exists then $P - (v,w_i)$ is a $c_vc_{w_i}$-path connecting $w_i$ and $w_j$, so the algorithm would terminate at STEP 2 in round $i < j$.
\label{fig:lemma12}}
  \end{center}
\end{center}
\end{figure}

\begin{lemma}\label{missing}
Let $v$ be a vertex that has a balanced coloring. Let $e \in E(v)$ be uncolored and let $i$ be a missing color on $v$. Coloring $e$ with $i$ will not violate the balanced coloring property at $v$. 
\end{lemma}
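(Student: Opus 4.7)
The plan is to case-split on the clause of Definition~\ref{def:weak} that qualifies $i$ as missing on $v$, and in each case to check directly that the two balanced-coloring conditions at $v$ survive the change. First I would observe that assigning color $i$ to $e$ affects only the single color class $E_i(v)$, increasing its size by one; all other classes at $v$ are untouched. So the check reduces to: (a) does $|E_i(v)|$ stay at most $m_v+1$, and (b) does the number of strong classes stay at most $\min(\alpha_v+1,k-1)$?

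In the routine cases, the argument is immediate. If $|S_v| < \min(\alpha_v+1, k-1)$ and $i$ is very weak on $v$, the bumped class still has size at most $m_v$, so nothing structural changes. If $|S_v| < \min(\alpha_v+1, k-1)$ and $i$ is weak, the class becomes strong and $|S_v|$ rises by one, but the strict inequality provides exactly the slack needed so that $|S_v|+1 \le \min(\alpha_v+1,k-1)$; and the new class has size precisely $m_v+1$, meeting the per-class bound. If $|S_v| = \alpha_v+1$, then $i$ is very weak by definition, so the bumped class has size at most $m_v$ and $|S_v|$ is unchanged.

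The hard part will be the clause $|S_v| = k-1$, because the naive reading permits $i$ to be weak, in which case coloring $e$ with $i$ would push $|S_v|$ to $k$ and violate the bound $\min(\alpha_v+1,k-1)$. To rule this out, I would use a degree count: if $k-1$ classes are strong (contributing $(k-1)(m_v+1)$ already-colored edges at $v$) and the remaining class were weak (contributing $m_v$), the colored degree of $v$ would be $km_v+k-1 \ge km_v+\alpha_v = d_v$. Since $e \in E(v)$ is uncolored, however, the colored degree of $v$ is strictly less than $d_v$, which forces the non-strong class to have fewer than $m_v$ edges, i.e.\ to be very weak. After the bump we then have $|E_i(v)| \le m_v$, $|S_v|$ stays at $k-1$, and both balanced-coloring conditions persist.

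Taken together, the three cases cover every possibility, so $v$ retains a balanced coloring. The main obstacle, as anticipated, is the $|S_v| = k-1$ case: Definition~\ref{def:weak} on its face allows a weak missing color, and the degree-counting step is what actually rescues the argument by tying the definition to the standing hypothesis that $e$ is still uncolored at $v$.
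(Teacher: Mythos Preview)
Your proof is correct and follows the natural case analysis one would expect. The paper itself does not include a proof of this lemma, so there is nothing to compare against; your degree-counting argument for the $|S_v|=k-1$ case is exactly the right way to reconcile Definition~\ref{def:weak} with the hypothesis that $e$ is uncolored, and the remaining cases are routine as you say.
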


\begin{lemma}\label{flipping}
A flipping of an $ab$-path in a graph with balanced coloring will not violate the balanced coloring.
Moreover, a terminal node of the path which was originally missing $b$ (resp., $a$) and not missing $a$ (resp., $b$)
will be missing $a$ (resp., $b$) 
after flipping.
\end{lemma}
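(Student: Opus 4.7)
My proof plan is to analyze the effect of the flip on each vertex touched by the $ab$-path. The central observation is that every internal visit to a vertex $w$ uses exactly one $a$-edge and one $b$-edge (by the alternating-color property of Definition \ref{def:path}), so swapping the colors on the path-edges incident to $w$ preserves the full multiset of color-class sizes at $w$. Consequently, for any interior vertex, both the balanced-coloring property and the missing status of every color are unchanged. The proof therefore reduces to the two endpoints of the path.

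I will zoom in on the starting endpoint $u$; the analysis for the other endpoint is symmetric, with $a$ and $b$ possibly swapped according to the color of the arrival edge in Definition \ref{def:path}. Only one path-edge at $u$ has net effect (extra internal visits to $u$ contribute nothing, by the observation above), so the flip decreases $|E_a(u)|$ by one and increases $|E_b(u)|$ by one. Writing $n_a = |E_a(u)|$ and $n_b = |E_b(u)|$, I will read off tight bounds from the hypotheses via Definition \ref{def:weak}: ``missing $b$'' forces $b$ to be weak or very weak at $u$, so $n_b \le m_u$; ``not missing $a$'' forces $a$ to be strong (under the first clause of the definition) or at least weak (under the $|S_u| = \alpha_u + 1$ clause), so $n_a \ge m_u$. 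Combined with the pre-flip balanced bound $n_a, n_b \le m_u + 1$, this yields $n_a', n_b' \le m_u + 1$, so no color class overflows the balanced cap.

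Next I must verify that $|S_u|$ does not grow past $\min(\alpha_u + 1, k - 1)$ after the flip. In every sub-case $a$ either drops out of $S_u$ (strong $\to$ weak) or remains outside it (weak $\to$ very weak), so $a$ never enters $S_u$; $b$ enters $S_u$ only if it was weak (now becoming strong), in which case a one-for-one trade leaves $|S_u|$ unchanged. Hence $|S_u|$ is non-increasing under the flip and the bound is preserved. Finally, to confirm that $a$ becomes missing, I will inspect which clause of Definition \ref{def:weak} controls $|S_u|'$: if $|S_u|'$ falls in the first clause, then weak and very weak both count as missing and the post-flip $a$, being at most weak, qualifies; if $|S_u|'$ falls in the $\alpha_u + 1$ clause, I will argue that $u$ must already have been in that clause with $a$ originally weak, so after the flip $a$ is very weak and again missing.

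The main obstacle will be the bookkeeping in this last step, since Definition \ref{def:weak} changes its notion of ``missing'' depending on $|S_u|$, and the flip can shift $|S_u|$ between clauses. I plan to dispatch this by a short case analysis driven by whether the original $(n_a, n_b)$ was $(m_u{+}1, m_u)$, $(m_u{+}1, <\!m_u)$, or $(m_u, <\!m_u)$; each case fixes the change in $|S_u|$ and the post-flip statuses of $a$ and $b$, from which the conclusions of Lemma \ref{flipping} and the missing-status claim for the other endpoint follow by the same argument with $a$ and $b$ interchanged.
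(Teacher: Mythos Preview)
Your approach is correct and is essentially the only natural one: analyze interior vertices versus terminal vertices separately, observe that every interior traversal pairs one $a$-edge with one $b$-edge so that the multiset $\{|E_i(w)|\}_i$ is invariant there, and then do the two-color bookkeeping at the endpoints via Definition~\ref{def:weak}. The paper (being a technical report) states Lemma~\ref{flipping} without proof, but your argument is exactly the argument the surrounding development is set up for, and it matches the standard Vizing-style reasoning that the algorithm generalizes.

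One small point worth tightening in your write-up: your sentence ``$b$ enters $S_u$ only if it was weak \dots\ in which case a one-for-one trade leaves $|S_u|$ unchanged'' implicitly uses that whenever $b$ was weak we must have been in the first clause of Definition~\ref{def:weak} (since in the $|S_u|=\alpha_u+1$ clause ``missing $b$'' forces $b$ to be very weak), and hence $a$ was strong and does leave $S_u$. You clearly have this in mind from your three-case breakdown $(m_u{+}1,m_u)$, $(m_u{+}1,<\!m_u)$, $(m_u,<\!m_u)$, but stating it explicitly at that point removes any appearance of a gap. Likewise, your claim that ``if $|S_u|'$ falls in the $\alpha_u+1$ clause then $u$ was already in that clause with $a$ weak'' is correct because $|S_u|' \ge |S_u| - 1$ and $|S_u| \le \alpha_u + 1$ by the balanced hypothesis, so $|S_u|' = \alpha_u + 1$ forces $|S_u|' = |S_u|$; making that one-line counting argument explicit completes the case analysis cleanly.
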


\begin{theorem}\label{mainthm}
The above algorithm terminates and achieves a balanced coloring.
\end{theorem}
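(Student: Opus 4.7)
The plan is to prove both halves of the theorem by induction on the number of edges already processed by the outer loop, with the subroutine \textsc{BalancedColoring} serving as the inductive step. The base case is trivial: with no colored edges, every color class has size zero, every color is very weak on every vertex, no color is strong, and balanced coloring holds everywhere. For the inductive step, I assume that the current partial coloring is balanced before \textsc{BalancedColoring}$(e)$ is called on an edge $e=(v,w)$, and show both that the call terminates and that the resulting coloring is still balanced.

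For termination, I first check that every round is well-defined. Lemma \ref{saturated} gives $\mathcal{C}_v\neq\emptyset$ and $\mathcal{C}_{w_1}\neq\emptyset$, so round $1$ can choose $c_v$ and $c_{w_1}$; Lemma \ref{lemma:saturated2} provides a $c_{w_i}\in\mathcal{C}_{w_i}\setminus\{c_{w_{i-1}}\}$ at every subsequent round. Each round either terminates in STEP 1 or STEP 2, or else executes STEP 3, which advances the chain of neighbors $w_1,w_2,\dots$ by one and uncolors a new edge $(v,w_{i+1})$ incident to $v$. The essential progress bound is Lemma \ref{termin}: by some round $j<d_v$ there is a $c_v c_{w_j}$-path starting at $w_j$ and not ending at $v$, which forces the algorithm out of STEP 3 into STEP 2 (STEP 1 can only trigger earlier). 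Hence the procedure terminates after at most $d_v$ rounds.

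For preservation of balanced coloring, I will verify the three possible terminating/transitioning actions separately. In STEP 1, the chosen color lies in $\mathcal{C}_v\cap\mathcal{C}_{w_i}$ and so is missing at both endpoints; Lemma \ref{missing} applied at $v$ and $w_i$ yields balance. In STEP 2, Lemma \ref{flipping} guarantees that flipping the $c_v c_{w_i}$-path preserves balanced coloring along the entire path and, crucially, turns $c_v$ into a missing color at $w_i$; coloring $(v,w_i)$ with $c_v$ then uses a color missing at both endpoints, and Lemma \ref{missing} again applies. In STEP 3, uncoloring $(v,w_{i+1})$ (which had color $c_{w_i}$) and recoloring $(v,w_i)$ with $c_{w_i}$ leaves the multiset of colors at $v$ exactly unchanged, so balance at $v$ is automatic; at $w_i$ the color added is missing (by the choice of $c_{w_i}$), so Lemma \ref{missing} applies; at $w_{i+1}$ an edge is merely removed, which can only shrink color classes and so cannot create a new strong class or push any class above $m_{w_{i+1}}+1$. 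The invariant therefore holds when the algorithm enters round $i+1$ with uncolored edge $(v,w_{i+1})$.

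The main obstacle is STEP 3, where one needs balance at $v$ to survive a potentially long chain of swaps before the round finally terminates. The observation that rescues the argument is that every STEP 3 exactly conserves the color multiset at $v$, so the entire burden of balance-preservation collapses onto single-vertex reasoning at $w_i$ and $w_{i+1}$, which Lemma \ref{missing} and the trivial monotonicity of edge removal dispatch. Combining this with Lemma \ref{termin} to bound the length of the STEP 3 chain, the proof reduces to the clean finite case analysis above, and induction on the number of processed edges gives the theorem.
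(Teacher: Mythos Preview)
Your proposal is correct and follows essentially the same approach the paper outlines: termination via Lemmas \ref{saturated}, \ref{lemma:saturated2}, and \ref{termin}, and preservation of balanced coloring via Lemmas \ref{missing} and \ref{flipping}, organized as an induction over the edges processed. Your case analysis for STEP~3---noting that the color multiset at $v$ is exactly conserved, that $c_{w_i}$ is missing at $w_i$, and that uncoloring at $w_{i+1}$ can only shrink classes---fills in the details the paper leaves implicit, but the skeleton is the same.
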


\begin{theorem}\label{approxfactor}
A balanced coloring of a graph gives at most $2|V|$ more conflicts than $OPT$ which implies a $(1 + \frac{|V|}{|E|})$-approximation algorithm for the soft edge coloring problem in homogeneous networks.
\end{theorem}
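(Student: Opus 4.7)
My plan is to work with the equivalent objective from (\ref{eq:otherdef}), $CF_G = \sum_{v}\sum_i |E_i(v)|^2$, and obtain a vertex-by-vertex comparison between a balanced coloring and any optimal coloring. First, I would compute the local optimum at each vertex $v$: since $\sum_i |E_i(v)| = d_v$ is fixed and the sum of squares is convex, the minimum over non-negative integer $k$-tuples summing to $d_v$ is attained when the entries are as equal as possible, namely $\alpha_v$ entries equal to $m_v+1$ and $k-\alpha_v$ entries equal to $m_v$. Denote this quantity $L_v := \alpha_v(m_v+1)^2 + (k-\alpha_v)m_v^2 = k m_v^2 + 2\alpha_v m_v + \alpha_v$. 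Since $L_v$ is a pointwise lower bound for the local cost in \emph{any} coloring, summing gives $OPT \ge \sum_v L_v$.

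Next I would upper bound the local cost of a balanced coloring at each $v$. Let $s_v$ be the number of strong color classes at $v$. The balanced property caps $s_v \le \min(\alpha_v+1, k-1)$ from above; conversely, since non-strong classes have size at most $m_v$, we have $d_v \le s_v(m_v+1) + (k-s_v)m_v$, forcing $s_v \ge \alpha_v$. If $s_v = \alpha_v$, then the non-strong classes must sum to $(k-\alpha_v)m_v$ with each at most $m_v$, so each equals exactly $m_v$ and the local cost is exactly $L_v$. If $s_v = \alpha_v+1$ (only possible when $\alpha_v \le k-2$), the non-strong classes sum to $(k-\alpha_v-1)m_v - 1$ with each at most $m_v$; the total deficit from saturation is only $1$, which forces one non-strong class to have size $m_v-1$ and the remaining $k-\alpha_v-2$ classes to have size exactly $m_v$. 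A direct expansion then gives the local cost $(\alpha_v+1)(m_v+1)^2 + (k-\alpha_v-2)m_v^2 + (m_v-1)^2 = L_v + 2$.

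Summing the local bounds over $V$ yields that a balanced coloring has total cost at most $\sum_v (L_v + 2) \le OPT + 2|V|$, which is the additive bound. For the multiplicative claim, I would use the form in (\ref{eq:metric}) together with the trivial observation that $CF_e(u), CF_e(v) \ge 1$ for every edge $e = (u,v)$, because $e$ itself lies in its own color class at each endpoint; hence $OPT \ge 2|E|$. Combining, $CF_G(\text{balanced})/OPT \le 1 + 2|V|/OPT \le 1 + |V|/|E|$, which is the stated approximation ratio.

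The main subtlety is the case $s_v = \alpha_v + 1$: although the definition of balanced coloring only imposes the upper cap $m_v + 1$ per class, the total-sum constraint combined with that cap \emph{forces} the non-strong multiset to be $\{m_v, \ldots, m_v, m_v-1\}$, leaving no room for a more uneven distribution. This rigidity is precisely what keeps the per-vertex additive loss at the constant $2$ rather than growing with $m_v$ or $k$, and it is what makes the global additive error scale linearly in $|V|$ only. The remainder is routine algebra plus the easy global lower bound $OPT \ge 2|E|$.
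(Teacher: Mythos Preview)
Your proof is correct and follows exactly the approach the paper's definitions are designed for: the per-vertex lower bound $L_v = km_v^2 + 2\alpha_v m_v + \alpha_v$ together with the observation that a balanced vertex has either profile $(m_v{+}1)^{\alpha_v}\,m_v^{\,k-\alpha_v}$ (cost $L_v$) or $(m_v{+}1)^{\alpha_v+1}\,m_v^{\,k-\alpha_v-2}\,(m_v{-}1)$ (cost $L_v+2$), yielding the $2|V|$ additive gap, and then the trivial bound $OPT \ge 2|E|$ from $CF_e(u),CF_e(v)\ge 1$ gives the ratio. This is the paper's intended argument (note also how your case analysis immediately recovers the stated Corollary that $\alpha_v = k-1$ for all $v$ forces optimality).
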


\begin{corollary}
For any $v$ if $\alpha_v = k-1$  the algorithm gives an optimal solution.
\end{corollary}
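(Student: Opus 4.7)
The plan is to show that when $\alpha_v = k-1$ holds at every vertex $v$, the balanced coloring produced by Algorithm \textsc{BalancedColoring} is locally optimal at each vertex, and that a matching lower bound holds for any feasible coloring, so the global sum is optimal.

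First, I would establish a per-vertex lower bound. For any coloring of $G$ using $k$ colors, the contribution of vertex $v$ to $CF_G$ (using the formulation in \eqref{eq:otherdef}) is $\sum_i |E_i(v)|^2$, subject to $\sum_i |E_i(v)| = d_v$. By convexity of $x \mapsto x^2$ (equivalently, Cauchy--Schwarz or the standard integer partition argument), this quantity is minimized exactly when the color-class sizes at $v$ are as balanced as possible, namely $\alpha_v$ classes of size $m_v+1$ and $k - \alpha_v$ classes of size $m_v$, yielding the bound
\begin{equation*}
\sum_i |E_i(v)|^2 \;\ge\; \alpha_v(m_v+1)^2 + (k-\alpha_v)\, m_v^2.
\end{equation*}
Summing over all $v \in V$ gives a lower bound on $OPT$.

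Next, I would verify that when $\alpha_v = k-1$, the balanced coloring at $v$ achieves this per-vertex bound. By definition of balanced coloring, $|S_v| \le \min(\alpha_v+1, k-1) = k-1$ and no color class at $v$ exceeds $m_v+1$. Writing $d_v = k m_v + (k-1)$ and letting $s = |S_v|$, the total number of edges incident to $v$ is at most $s(m_v+1) + (k-s)m_v = k m_v + s$, so $s \ge k-1$. Combined with $s \le k-1$, this forces exactly $k-1$ strong classes of size $m_v+1$ and a single remaining class whose size is $d_v - (k-1)(m_v+1) = m_v$. Hence the local conflict count at $v$ is exactly $(k-1)(m_v+1)^2 + m_v^2$, which matches the per-vertex lower bound with $\alpha_v = k-1$.

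Finally, I would conclude by summing: the algorithm's total conflict count equals $\sum_v [(k-1)(m_v+1)^2 + m_v^2] = \sum_v [\alpha_v(m_v+1)^2 + (k-\alpha_v)m_v^2] \le OPT$, so the algorithm is optimal. The only subtlety—and the main thing to be careful about—is justifying that the balanced coloring really must place exactly one edge in the weak class (no ``very weak'' class is possible here) when $\alpha_v = k-1$; this follows from the degree-counting argument above, but it is the one place where the specific value $\alpha_v = k-1$ is used in a nontrivial way. Everything else is a standard convexity/optimality-summation argument.
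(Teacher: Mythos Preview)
Your argument is correct and is precisely the natural deduction the paper intends: the paper does not spell out a proof for this corollary, but it is meant to drop out of the definition of balanced coloring together with the per-vertex convexity lower bound underlying Theorem~\ref{approxfactor}. The key observation you identify---that when $\alpha_v = k-1$ the cap $\min(\alpha_v+1,\,k-1)$ equals $k-1=\alpha_v$, and degree counting then forces exactly $\alpha_v$ strong classes and one weak class of size $m_v$ with no slack---is exactly the point, and your treatment of it is complete.
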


\subsubsection{Best Possible approximation for dense graphs unless P = NP}
\label{sec:best}
 We can show that the approximation ratio given by the algorithm 
is best possible unless $P=NP$. 
\begin{theorem}
For a given constant $0 < \epsilon< 1$, it is NP-hard to approximate the channel assignment problem  in homogeneous networks 
within an additive term of $o(2|V|^{1- \epsilon})$ and thus it is NP-hard to get an approximation factor with $1 + o(\frac{|V|^{1- \epsilon}}{|E|})$ .
\label{theorem:inapprox}
\end{theorem}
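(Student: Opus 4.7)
The plan is to reduce from the NP-hard problem of deciding whether a cubic graph admits a proper $3$-edge-coloring (Holyer~\cite{Holyer}), and to amplify the hardness gap by a disjoint-union blow-up so it has magnitude $|V|^{1-\epsilon}$. First I would fix $k = C_G = 3$ and observe that on a cubic graph $H$ with $n$ vertices, the soft edge coloring objective satisfies $OPT(H) = 0$ iff $H$ is properly $3$-edge-colorable (since $\Delta(H) = 3$ forces every feasible coloring to use exactly $k$ colors at every vertex). In the ``no'' case at least one pair of incident edges must share a color, so $OPT(H) \ge 2$ under~(\ref{eq:metric}); thus distinguishing $OPT(H) = 0$ from $OPT(H) \ge 2$ is NP-hard.

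Next I would form $G'$ as the disjoint union of $N = \lceil n^{(1-\epsilon)/\epsilon} \rceil$ copies of $H$; this is a polynomial-time reduction since $\epsilon$ is a fixed constant. Because different connected components do not interact through the constraints, $OPT(G') = N \cdot OPT(H)$, so $OPT(G') \in \{0\} \cup [2N, \infty)$. The algebraic identity $n \cdot n^{(1-\epsilon)/\epsilon} = n^{1/\epsilon}$ gives $|V(G')| = Nn = \Theta(n^{1/\epsilon})$, hence $|V(G')|^{1-\epsilon} = \Theta(n^{(1-\epsilon)/\epsilon}) = \Theta(N)$, so the gap is of order $|V(G')|^{1-\epsilon}$.

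Now suppose for contradiction we had a polynomial-time algorithm with additive error $f(|V|) = o(|V|^{1-\epsilon})$. On $G'$ its output $ALG$ satisfies $ALG \le OPT(G') + f(|V(G')|)$, so in the yes-case $ALG = o(|V(G')|^{1-\epsilon})$, while in the no-case $ALG \ge OPT(G') \ge 2|V(G')|^{1-\epsilon}$; for sufficiently large $n$ these ranges are disjoint and thresholding $ALG$ would decide $3$-edge-colorability of $H$, contradicting NP-hardness. For the multiplicative statement, observe that in the cubic graph $G'$ each vertex contributes at most $\Delta^2 = O(1)$ to~(\ref{eq:otherdef}), so $OPT(G') = O(|V(G')|) = O(|E(G')|)$; hence a $(1 + g)$-approximation with $g = o(|V|^{1-\epsilon}/|E|)$ incurs additive slack $g \cdot OPT = o(|V|^{1-\epsilon})$, reducing to the additive case already handled.

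The main obstacle is calibrating the blow-up factor: $N$ must be polynomial in $n$ (for the reduction to be valid) while simultaneously forcing $|V(G')|^{1-\epsilon}$ to track $2N$ asymptotically, and the choice $N = n^{(1-\epsilon)/\epsilon}$ is essentially the only one that balances these constraints. A minor subtlety to check is that disjoint copies trivially respect the per-vertex constraint $C_v = k = 3$, because every connected component is still cubic; given this, everything else is a routine thresholding argument driven by the NP-hardness of Holyer's result.
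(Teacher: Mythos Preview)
The paper as provided does not include a proof of this theorem, so I am evaluating your proposal on its own merits.

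Your overall strategy---reduce from Holyer's \textsf{NP}-hardness of $3$-edge-coloring cubic graphs, then take a disjoint union of $N=\lceil n^{(1-\epsilon)/\epsilon}\rceil$ copies to stretch the yes/no gap to order $|V|^{1-\epsilon}$---is exactly the natural approach and is correct in structure. The calibration $N=n^{(1-\epsilon)/\epsilon}$ is the right choice, and the observation that disjoint components decouple so that $OPT(G')=N\cdot OPT(H)$ is sound.

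There is, however, a genuine bookkeeping error that invalidates your argument as written. Under the paper's objective~(\ref{eq:metric}) (equivalently~(\ref{eq:otherdef})), one has $CF_e(v)\ge 1$ for every edge and endpoint, so $CF_G\ge 2|E|$ always; indeed by Lemma~\ref{lemma:lower} the minimum is $\sum_v d_v^2/k$, which for a cubic graph with $k=3$ equals $3n=2|E(H)|$. Thus in the \emph{yes} case $OPT(H)=3n$, not $0$, and your thresholding ``$ALG=o(|V(G')|^{1-\epsilon})$ in the yes case'' is impossible since $ALG\ge OPT(G')=3|V(G')|$. You appear to be implicitly using a ``number of conflicting pairs'' objective rather than the paper's sum-of-squares objective.

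The fix is easy and preserves your argument. In the \emph{no} case at least one vertex has color profile $(2,1,0)$ instead of $(1,1,1)$, contributing $5$ rather than $3$ to~(\ref{eq:otherdef}); by a parity count on each color class (cubic graphs have even order) at least two vertices must be unbalanced, so $OPT(H)\ge 3n+4$. Hence the yes/no gap is still $\ge 4$ per copy and $\ge 4N=\Theta(|V(G')|^{1-\epsilon})$ after the blow-up. Threshold on $ALG-3|V(G')|$ rather than on $ALG$: in the yes case this quantity is at most the additive slack $o(|V(G')|^{1-\epsilon})$, while in the no case it is at least $4N$. The multiplicative statement then follows exactly as you wrote, since $OPT(G')=\Theta(|V(G')|)=\Theta(|E(G')|)$ in the cubic setting.
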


\subsection{Networks where $C_v = 1$ or $k$}
\label{sec:het}
In this section, 
we present two algorithms for networks with $C_v = 1$ or $k$
 and analyze the approximation ratios of the algorithms.
The case where $C_v = 1$ or $k$ is interesting since 
 it reflects a realistic setting, in which 
most of mobile stations are equipped with one wireless card and 
nodes with multiple wireless cards are placed in strategic places
to increase the capacity of networks.

\subsubsection{NP-Hardness}
\label{sec:hetnp}
The problem is NP-hard even when $C_v = 1$ or $2$. We show it by reducing 3SAT to this problem. 
\begin{theorem}
The channel assignment problem to minimize the number of conflicts 
is NP-hard even when $C_v =1$ or $2$, and $C_G = 2$.
\label{theorem:nphard}
\end{theorem}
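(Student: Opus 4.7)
The plan is to reduce from NAE-3SAT (which is NP-hard, and itself reduces from 3SAT by the standard trick of appending a fresh variable to each clause), exploiting the fact that $C_G = 2$ forces every coloring to be a $2$-partition of $E$. Given a formula $\phi$ with variables $x_1,\dots,x_n$ and clauses $C_1,\dots,C_m$, I build $G$ with $C_G = 2$ and $C_v \in \{1,2\}$ as follows. For each variable $x_i$, introduce two \emph{literal vertices} $u_i^T$ and $u_i^F$ with $C_v = 1$; because $C_v = 1$ at these vertices, every edge incident to $u_i^T$ (resp.\ $u_i^F$) is forced to share a single color, which I interpret as the truth value assigned to $x_i$ (resp.\ $\neg x_i$). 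To enforce $\chi(u_i^T) \neq \chi(u_i^F)$, attach $M$ \emph{inverter vertices} $w_i^{(1)},\dots,w_i^{(M)}$ with $C_v = 2$, each joined by a single edge to each of $u_i^T$ and $u_i^F$. For each clause $C = (\ell_1 \vee \ell_2 \vee \ell_3)$, introduce a \emph{clause vertex} $v_C$ with $C_v = 2$ and three edges to the literal vertex of each $\ell_j$ (to $u_i^T$ if $\ell_j = x_i$, to $u_i^F$ if $\ell_j = \neg x_i$).

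Using the equivalent expression $CF_G = \sum_v \sum_i |E_i(v)|^2$, each literal vertex contributes its squared degree (a constant, independent of the coloring); each inverter contributes $2$ when its two incident edges receive opposite colors and $4$ otherwise; each clause vertex contributes $5$ for a $(2,1)$ split (NAE-satisfied) and $9$ for a $(3,0)$ split (NAE-unsatisfied). Because both $u_i^T$ and $u_i^F$ are $C_v = 1$, all $M$ inverters of a given variable are synchronized: they are all balanced exactly when $\chi(u_i^T) \neq \chi(u_i^F)$. Setting $M = 2m+1$ makes the $2M$ penalty incurred by any unbalanced variable strictly exceed the maximum clause-side savings (at most $4$ per clause, summed over $m$ clauses), so every optimal coloring must have all inverters balanced, and therefore encodes a consistent truth assignment of $\phi$. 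Under such consistent assignments, the total objective is $F + 2nM + 5m + 4r$, where $F$ is the fixed literal-vertex contribution and $r$ is the number of NAE-unsatisfied clauses; thus the optimum equals $F + 2nM + 5m$ iff $\phi$ is NAE-satisfiable, giving the desired polynomial-time reduction.

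The main obstacle is the quantitative bookkeeping that certifies the inverter gadgets are rigid in any optimal coloring: flipping $\chi(u_i^F)$ simultaneously re-colors one edge at every clause vertex $v_C$ whose clause contains $\neg x_i$, and one has to verify that no accumulated clause-side gain can offset the $2M$ inverter loss. This is exactly what dictates the choice $M = \Theta(m)$. A minor but worth-noting point is that using $M$ independent degree-$2$ inverter vertices, rather than a single bundle of parallel edges between $u_i^T$ and $u_i^F$, keeps $G$ a simple graph.
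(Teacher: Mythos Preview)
Your reduction is correct. The variable gadget (two $C_v=1$ literal vertices linked by $M$ degree-$2$ inverters) and the clause gadget (a single $C_v=2$ vertex of degree $3$) behave exactly as you claim under the objective $\sum_v\sum_i |E_i(v)|^2$: inverter cost is $2$ versus $4$, clause cost is $5$ versus $9$, and the literal-vertex contribution is fixed at $d_v^2$. The local-improvement step---flipping the color at $u_i^F$ when variable $i$ is unbalanced saves exactly $2M$ at the inverters while costing at most $4$ at each of the at most $m$ clause vertices touching $u_i^F$---is sound and feasibility is preserved, so $M=2m+1$ indeed forces every optimum to encode a consistent assignment. The threshold $F+2nM+5m$ is poly-time computable and the instance size is $O(nm)$, so the reduction is polynomial.

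The paper's own proof is only sketched as ``reducing \textsc{3SAT} to this problem'' with no gadget details given, so a fine-grained comparison is not possible. Your choice of \textsc{NAE-3SAT} is the natural one under $C_G=2$: the two available colors are symmetric, so a degree-$3$ clause vertex distinguishes the $(2,1)$ split from the $(3,0)$ split but cannot tell all-true from all-false, which is precisely the NAE predicate. A direct \textsc{3SAT} reduction would require an additional asymmetry gadget at each clause; going through \textsc{NAE-3SAT} (itself a standard polynomial consequence of \textsc{3SAT}) avoids that and keeps the construction cleaner. Both routes are equally valid and yield the same hardness statement.
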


\subsubsection{Extended Greedy Algorithm}
\label{sec:ext}

Here we present an extended greedy algorithm when $C_v = 1$ or $k$,
and $C_G \ge k$.
The approximation factor is $2 - \frac{1}{k}$. Even though
the algorithm based on SDP (semi-definite programming) gives
a better approximation factor (see Section \ref{sec:sdp}), 
the greedy approach gives a simple combinatorial algorithm. 
The algorithm generalizes the idea of the greedy algorithm 
for homogeneous networks.

Before describing the algorithm, we define some notations. 
Let $V_i \subseteq V$ be the set of nodes $v$ with $C_v = i$
(i.e., we have $V_1$ and $V_k$).
$V_1$ consists of connected clusters $V^1_1, V^2_1, \dots V^t_1$,
such that nodes $u, v \in V_1$ belong to the same cluster
if and only if there is a path composed of nodes in $V_1$ only.
(See Figure \ref{fig:cgreedy} for example.)
Let $E^i_1$  be a set of edges both of which endpoints are in $V^i_1$.
We also define $B^i_1$ to be a set of edges whose one
endpoint is in $V^i_1$ and the other is in $V_k$. We can think of 
$B^i_1$ as a set of edges in the boundary of cluster $V_1^i$.
Note that all edges in $E^i_1 \bigcup B^i_1$ should have the same color.
$E_k$  is a set of edges both of which endpoints are in $V_k$.
$E_1$ is defined to be $\bigcup_i E^i_1$ and $B_1$ is defined to be $\bigcup_i B^i_1$

\begin{figure}[ht]      
\begin{center}        
  \begin{center}
    \centerline{\includegraphics[width=2.5in]{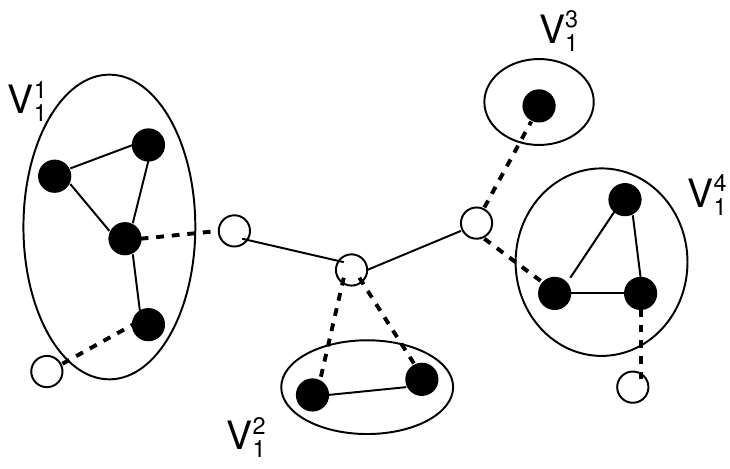}}
    \caption{The figure shows an example of clusters $V_1^i$
when $C_v = 1$ or $k$. Black nodes have only one wireless card
and white nodes have $k$ wireless cards. Dotted lines belong
to $B^i_1$.
\label{fig:cgreedy}}
  \end{center}
\end{center}
\end{figure}

In the greedy algorithm for homogeneous networks,
each edge greedily chooses a color so that
the number of conflicts it creates (locally) is minimized.
Similarly, when $C_v = 1$ or $k$, edges in the same cluster $V_1^i$
choose a color  so that the number of conflicts it creates is minimized.
Formally, we choose a color $c$ with minimum value of 
$\sum_{e = (u, v) \in B^i_1, v \in V_k} n_c(v)$ where
$n_c(v)$ is the number of edges $e' \in E(v)$ with color $c$. Algorithm \ref{cgreedy} describes the extended greedy algorithm.

\begin{algorithm}[ht]
\caption{\bf  Extended Greedy Algorithm}
\label{cgreedy}
\begin{algorithmic}
\FOR {each cluster $V_1^i$}
	\STATE (choose a color for edges in $E^i_1 \bigcup B^i_1$ as follows)

\IF {$B^i_1$ is empty} 
	\STATE choose any color for $E^i_1$.
\ELSE 
	\FOR{each color $c \in \{1, \dots, k\}$ }
	\STATE count the number of conflicts to be created 
when we choose color $c$ for $E^i_1 \bigcup B^i_1$.
Formally, count $\sum_{e = (u, v) \in B^i_1, v \in V_k} n_c(v)$ where
$n_c(v)$ is the number of edges $e' \in E(v)$ 
with color $c$.
\ENDFOR
\STATE choose a color $c$ 
that  minimizes $\sum_{e = (u, v) \in B^i_1, v \in V_k} n_c(v)$. 
\ENDIF
\ENDFOR

\FOR{each edge that belongs to $E_k$}
\STATE choose a color using the greedy algorithm
in Section \ref{section:greedy}.
\ENDFOR
\end{algorithmic}
\end{algorithm}

Any edges $(u, v)$  incident to a vertex in $V_1$
should use the same color 
and therefore are conflicting with each other no matter what algorithm we use.
Given an optimal solution, consider $OPT(V_1)$ and $OPT(V_k)$
where $OPT(S)$ is the number of conflicts at vertices in $S \subseteq V$.
Similarly, we have $CF(V_1)$ and $CF(V_k)$ where
$CF(S)$ is the number of conflicts at vertices in $S \subseteq V$ in our solution.
Then we have $OPT(V_1) = CF(V_1)$. Therefore, we only need to
compare $OPT(V_k)$ and $CF(V_k)$

\begin{theorem}
The approximation ratio of the extended greedy algorithm 
at $V_k$ is  $2 - \frac{1}{k}$.
\label{th:1k-approx}
\end{theorem}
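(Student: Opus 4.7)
The plan is to adapt the analysis of the greedy algorithm for homogeneous networks (Lemmas~\ref{lemma:lower} and~\ref{lemma:upper}), paying special attention to \emph{cluster events}, in which several edges incident to a single $V_k$ vertex receive the same color simultaneously. Throughout, for $v \in V_k$ let $d_v = |E(v) \cap (B_1 \cup E_k)|$, and for each cluster $V_1^i$ adjacent to $v$ let $\delta_v^i = |B_1^i \cap E(v)|$.

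First, I will establish two complementary lower bounds on $OPT(V_k)$. View the edges at each $v \in V_k$ as color units: each $E_k$-edge at $v$ is a singleton of size $1$, while for each cluster $V_1^i$ adjacent to $v$ the boundary edges form a single unit of size $\delta_v^i$, since all edges in $E_1^i \cup B_1^i$ must share one color. Writing $\sum_c |E_c(v)|^2 = \sum_c \bigl(\sum_{u \mapsto c} s_u\bigr)^2$ with $s_u$ the unit size, convexity gives the familiar bound $L_1 := \sum_v d_v^2/k \leq OPT(V_k)$. Separately, the inequality $(\sum_j x_j)^2 \geq \sum_j x_j^2$ for nonnegative $x_j$ yields $L_2 := \sum_v \bigl[\sum_i (\delta_v^i)^2 + |E_k \cap E(v)|\bigr] \leq OPT(V_k)$; this bound captures the unavoidable pairwise conflicts whenever a cluster forces multiple edges at the same $V_k$ vertex to share a color.

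Next I will upper bound $CF(V_k)$ by tracking the increment of $\sum_c |E_c(v)|^2$ per event. A cluster event assigning color $c^i$ to $V_1^i$ increases this sum at an adjacent $v$ by $2\delta_v^i n_{c^i}(v) + (\delta_v^i)^2$, where $n_c(\cdot)$ counts already-colored edges with color $c$; an $E_k$-event coloring $e = (u,v)$ with color $c$ contributes $2 n_c(u) + 1$ at $u$ and $2 n_c(v) + 1$ at $v$. In both cases the greedy rule plus averaging over the $k$ color choices yield
\[
\sum_{v \in V_k} \delta_v^{(t)} \, n_{c^{(t)}}(v) \;\leq\; \frac{1}{k} \sum_{v \in V_k} \delta_v^{(t)} \, d_v^{\text{before}}(v)
\]
at each event $t$. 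Summing over events and applying the per-vertex identity $\sum_t \delta_v^{(t)} d_v^{\text{before}}(v) = \tfrac{1}{2}\bigl(d_v^2 - \sum_t (\delta_v^{(t)})^2\bigr)$, I obtain
\[
CF(V_k) \;\leq\; \frac{1}{k} \sum_v d_v^2 \,+\, \Bigl(1-\frac{1}{k}\Bigr) \sum_v \Bigl[\sum_i (\delta_v^i)^2 + |E_k \cap E(v)|\Bigr] \;=\; L_1 + \Bigl(1-\frac{1}{k}\Bigr) L_2.
\]
Since $L_1, L_2 \leq OPT(V_k)$, the desired bound $CF(V_k) \leq (2 - \tfrac{1}{k}) OPT(V_k)$ follows.

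The main obstacle is that the cluster-step rule optimizes $c^i$ \emph{globally}, minimizing $\sum_{v} \delta_v^i n_c(v)$ across all $V_k$-side boundary vertices rather than per vertex, so a per-vertex averaging bound is too weak. The remedy is to apply averaging only to the $\delta_v^i$-weighted sum that the greedy actually optimizes, and then to invoke the second lower bound $L_2$ to absorb the $(\delta_v^i)^2$ self-interaction terms produced when many cluster-boundary edges at the same vertex are colored in a single step.
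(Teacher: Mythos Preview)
Your argument is correct. The two lower bounds $L_1 = \sum_{v\in V_k} d_v^2/k$ (convexity) and $L_2 = \sum_{v\in V_k}\bigl[\sum_i(\delta_v^i)^2 + |E_k\cap E(v)|\bigr]$ (forced intra-unit conflicts) are both valid, and the incremental accounting together with the telescoping identity $\sum_t \delta_v^{(t)} d_v^{\mathrm{before}} = \tfrac12\bigl(d_v^2-\sum_t(\delta_v^{(t)})^2\bigr)$ gives exactly $CF(V_k)\le L_1+(1-\tfrac1k)L_2\le (2-\tfrac1k)\,OPT(V_k)$. Your handling of the cluster step is the right one: because the algorithm minimizes the $\delta_v^i$-weighted sum $\sum_v \delta_v^i n_c(v)$, averaging must be applied to that weighted quantity, which is precisely what you do.

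The paper does not include a proof of Theorem~\ref{th:1k-approx} in the text (only the statement is given), so there is no paper proof to compare against directly. The commented-out alternative statement in the theorem environment hints that the authors may have aimed for a bound of the form $(2-\tfrac1k)OPT(V_k)+(1-\tfrac1k)|E(V_k)|$, which is structurally closer to Lemma~\ref{lemma:upper} but does not by itself yield the claimed ratio; your route via the second lower bound $L_2$ is cleaner and gives the multiplicative guarantee directly. In spirit your proof is the natural extension of Lemmas~\ref{lemma:lower} and~\ref{lemma:upper}: the new ingredient --- and the one the homogeneous argument lacks --- is precisely $L_2$, which absorbs the quadratic self-interaction $(\delta_v^i)^2$ that arises when a single greedy step colors many edges at one $V_k$-vertex.
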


Note that as in the homogeneous case, 
we can obtain the same expected approximation guarantee with a randomized algorithm, 
i.e., choose a color uniformly at random for each cluster $V_1^i$.
Note also that the approximation ratio remains the same for any $C_G \ge k$.
In the following section, we obtain a slightly better approximation factor 
using SDP relaxation  when $C_v = 1$ or $k$ and $C_G = k$.

\subsubsection{SDP-based Algorithm}
\label{sec:sdp}

In this subsection, we assume that $k$ different channels are
available in the network and all nodes have $1$ or $k$ wireless cards.  We formulate the problem using semidefinite programming.
Consider the following vector program (VP), which we can  
convert to an SDP and obtain an optimal solution in polynomial time. 
We have an $m$-dimensional unit vector $Y_e$ for each edge $e$ ($m \le n$).
\begin{eqnarray}
\mbox{\bf VP:~~~~~~}
\min  \sum_v \sum_{e_i, e_j \in E(v)} \frac{1}{k}((k-1) Y_{e_i} \cdot Y_{e_j}& +& 1)
\label{eqn:obj}\\
|Y_e| &= & 1 \label{eqn:edge} \\
Y_{e_i} \cdot Y_{e_j} &=& 1  ~~ \mbox{if~} C_v = 1 \label{eqn:color1}, ~e_i, e_j \in E(v) \\
Y_{e_i} \cdot Y_{e_j} &\ge& \frac{-1}{k-1}  ~~ \mbox{for ~}  ~e_i, e_j \in E(v) 
\end{eqnarray}

We can relate a solution of VP to a channel assignment as follows.
Consider $k$ unit length vectors in $m$-dimensional space such 
that for any pair of vectors $v_i$ and $v_j$, 
the dot product of the vectors is $- \frac{1}{k-1}$.
(these $k$ vectors form an equilateral $k$-simplex on a $(k-1)$-dimensional space
\cite{kcut,kms94}.)
Given an optimal channel assignment of the problem,
we can map each channel to a vector $v_i$.
$Y_e$ takes the vector that corresponds to the channel of edge $e$.
If $C_v = 1$, all edges incident to $v$ should have the same color.
The objective function is exactly the same as the number of 
conflicts in the given channel assignment
since if $Y_{e_1} = Y_{e_2}$ ($e_1$ and $e_2$ have the same color),
it contributes $1$ to the objective function, and 0 otherwise.
Thus the optimal solution of the VP gives a lower bound on the optimal solution.

The above VP can be converted to a semidefinite program (SDP) and
solved in polynomial time (within any desired precision) 
\cite{ali95,gls81,gls87,nn90,nn94}, and given a solution for the SDP,
we can find a solution to the corresponding VP, 
using incomplete Cholesky decomposition \cite{golub83}. 

We use the rounding technique used for {\sc Maxcut} by Goeman and Williamson 
\cite{GW} when $k = 2$ and show that the expected number of conflicts 
in the solution 
is at most $1.122OPT$.
 When $k > 2$, we obtain  the approximation guarantee of  $ 2 - \frac{1}{k} - \frac{2(1 + \epsilon) \ln k}{k} +
 O(\frac{k}{(k-1)^2})$ 
  where $\epsilon(k)\sim \frac{\ln \ln k}{(\ln k)^\frac{1}{2}} $.\\

\section{Data Migration}

\subsection{Introduction}

Large-scale storage systems are crucial components for today's data-intensive 
applications such as search engine clusters, video-on-demand servers, 
sensor networks, and grid computing.
A storage cluster can consist of several hundreds to thousands of storage devices,
which are typically 
connected using a dedicated high-speed network. 
In such systems, data locations may have to be changed over time 
for load balancing or in the event of disk addition and removal which can occur freuqently \cite{UCSB}.
It is critical to migrate data to their target disks
as quickly as possible to obtain the best performance of the system
since the storage system will perform sub-optimally  
until migrations are finished.

The data migration problem can be informally defined as follows.
We have a set of disks $v_1, v_2, \dots, v_n$ and
a set of data items $i_1, i_2, \dots, i_m$. Initially,
each disk stores a subset of items. Over time, data items
may be moved to another disk for load balancing or for system reconfiguration.
We can construct a \emph{transfer graph} $G = (V, E)$ where 
each node represents a disk and an edge $e= (u, v)$ represents a data item
to be moved from disk $u$ to $v$. 
Note that the transfer graph can be a multi-graph (i.e., there
can be multiple edges between two nodes) when multiple data
items are to be moved from one disk to another.
See Figure~\ref{fig:dm-example} for example.
In their ground-breaking work, Hall ~\cite{Karlin} studied
the data migration problem of scheduling migrations 
and developed efficient approximation algorithms.
In their algorithm, they assume that each disk can participate
in only one migration at a time and  both disks and data items are identical;
they show that this is exactly the problem of edge-coloring the transfer graph. 
Algorithms for edge-coloring multigraphs can now be applied to produce 
a migration schedule since each color class represents a matching 
in the graph that 
can be scheduled simultaneously.
\begin{figure}[h]
\centering
\includegraphics[width=2.5in,height=1.5in]{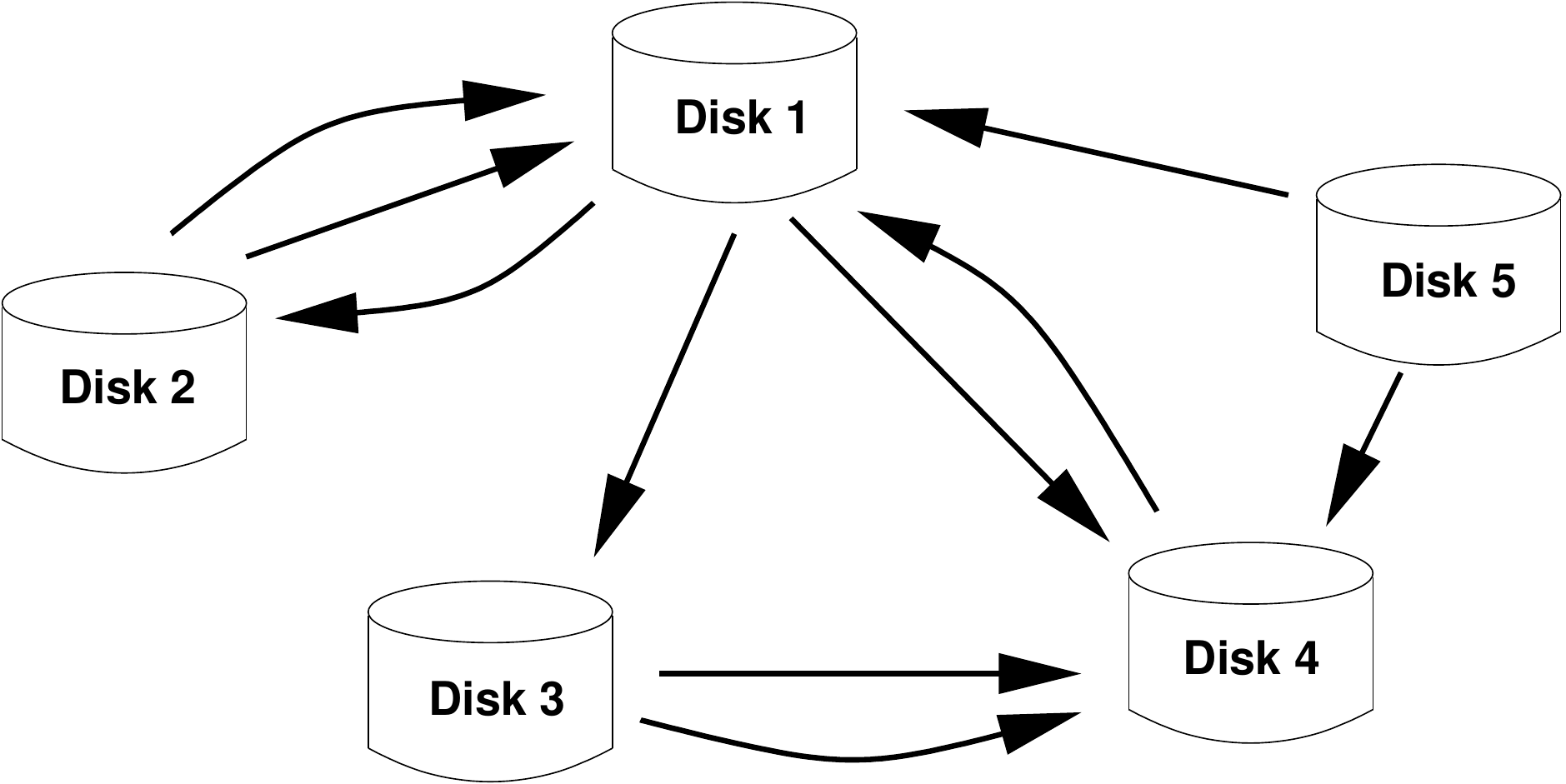}
\vspace{0.2in}
\caption{
An example of data transfer instance}
\label{fig:dm-example}
\end{figure}

\subsection{Related Work}
Hall et al ~\cite{Karlin} studied
the problem of scheduling migrations given a set of disks, with each storing a subset of items
and a specified set of migrations. A crucial constraint in their problem is that each disk can participate
in only one migration at a time. If both disks and data items are identical,
this is exactly the problem of edge-coloring a multi-graph.
That is, we can create a transfer graph $G(V, E)$ that has a node corresponding to each disk, and
a directed edge corresponding to each migration that is specified.
Algorithms for edge-coloring multigraphs can now be applied to produce
a migration schedule since each color class represents a matching
in the graph that can be scheduled simultaneously.
Computing a solution with the minimum number of
colors is \textsf{NP}-hard \cite{holyerNPhard}, but several approximation algorithms
are available for edge coloring

\subsection{Problem Definition}
In the {\sc Heterogeneous Data Migration} problem, 
we are given a transfer graph $G = (V, E)$.
Each node in $V$ represents a disk in the system
and each edge $e = (i, j)$ in $E$ represents
a data item that need to be transferred from $i$ to $j$.
We assume that each data item has the same length, and therefore
it takes the same amount of time for each data to migrate.
Note that the resulting graph is a multi-graph as
there may be several data items to be sent from disk $i$ to disk $j$.

We assume that transfers between disks can be done through 
a very fast network connection dedicated to support a storage system.
Therefore, we assume that any two disks can send data to each other directly.
{\em In particular, we assume that each disk $v$ can handle multiple transfers
at a time. Transfer constraint $c_v$ represents 
how many parallel data transfers the disk $v$ can perform
simultaneously}.

Our objective is to minimize the number of rounds
to finish all data migrations.

\subsubsection{Lower Bounds}
We have the following two lower bounds on the optimal solution.
\begin{eqnarray}
LB_1 & = & \Delta' = \max_v \lceil d_v / c_v \rceil \\
LB_2 & = & \Gamma' = \max_{S \subseteq V} \frac{|E(S)|}
{\lfloor \frac{\sum_{v \in S} c_v}{2} \rfloor}  
\end{eqnarray}
where $E(S)$ is the set of edges both of which endpoints
are in $S$.

$LB_1$ follows from the fact that for a node $v$, at most $c_v$ data items can be migrated
in a round.  When all $c_v$'s are even, $LB_1 \le LB_2$ and, 
in fact, we show that there is a migration schedule that can be
completed in $LB_1$ rounds.
The following lemma proves that $LB_2$ is a lower bound on the optimal
solution.
\begin{lemma}
$LB_2$ is a lower bound on the optimal solution.
\end{lemma}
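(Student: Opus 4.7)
The plan is to fix an arbitrary subset $S \subseteq V$ and argue that no valid schedule can complete the edges in $E(S)$ in fewer than $|E(S)| / \lfloor (\sum_{v \in S} c_v)/2 \rfloor$ rounds. Taking the maximum over $S$ then yields the lower bound $LB_2$.

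First, I would consider any single round of any feasible migration schedule and bound the number of edges in $E(S)$ that can be scheduled in that round. Call this number $r_S$. Each edge in $E(S)$ has both endpoints in $S$, and scheduling such an edge consumes one unit of transfer capacity at each of its two endpoints. Since each vertex $v \in S$ can participate in at most $c_v$ parallel transfers in a round, the total transfer capacity available within $S$ in a single round is at most $\sum_{v \in S} c_v$. Because each edge in $E(S)$ consumes exactly two units of this capacity, we obtain $2 r_S \le \sum_{v \in S} c_v$, and since $r_S$ is an integer, $r_S \le \lfloor (\sum_{v \in S} c_v)/2 \rfloor$.

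Next, I would sum over rounds. Since all $|E(S)|$ edges in $E(S)$ must eventually be scheduled, the number of rounds $T$ in any feasible schedule satisfies
\begin{equation*}
|E(S)| \;\le\; \sum_{t=1}^{T} r_S^{(t)} \;\le\; T \cdot \left\lfloor \frac{\sum_{v \in S} c_v}{2} \right\rfloor,
\end{equation*}
so $T \ge |E(S)| / \lfloor (\sum_{v \in S} c_v)/2 \rfloor$. This holds for every $S \subseteq V$, and in particular for the optimal schedule, giving $OPT \ge LB_2$.

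I do not expect any serious obstacle: the argument is a straightforward capacity/counting bound, and the only subtle point is correctly accounting for the factor of $2$ (each internal edge uses two endpoints in $S$, whereas edges with only one endpoint in $S$ are ignored, which is exactly why the bound uses $E(S)$ rather than the set of all edges touching $S$). The floor is valid because the count of edges scheduled within $S$ in a single round is necessarily integral.
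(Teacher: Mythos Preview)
Your proposal is correct and follows essentially the same approach as the paper: both argue that in any single round the number of edges with both endpoints in $S$ is at most $\lfloor (\sum_{v\in S} c_v)/2\rfloor$ via the handshake-style count $2r_S \le \sum_{v\in S} c_v$, and then sum over rounds. The paper phrases this as a decomposition $E_1,\dots,E_k$ with $2|E_i(S)| = \sum_{v\in S} d_i(v,S) \le \sum_{v\in S} c_v$, which is exactly your capacity argument in different notation.
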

\begin{proof}
An optimal migration is a decomposition of edges in $E$ into 
$E_1, E_2, \dots, E_k$ such that for each $E_i$ and a vertex $v$,
there is at most $c_v$ edges incident to $v$ in $E_i$.
For a subset $S \subseteq V$, let $E_i(S)$ be
the set of edges in $E$ both of which endpoints are in $S$
and $d_i(v, S)$ be the number of edges in $E_i(S)$ incident to $v$.
Then $2 |E_i(S)| =  \sum_{v \in S} d_i(v, S)$.
As $d_i(v,S) \le c_v$, we have $ |E_i(S)| \le 
\lfloor \frac{\sum_{v \in S} c_v}{2} \rfloor $.
As $E_i$'s cover all edges in $E(S)$, the lemma follows.
\end{proof}

\subsection{Optimal Migration Schedule for Even Transfer Constraints}

In this section, we describe a polynomial time algorithm to find an optimal migration
schedule when each node $v$ has even transfer constraint $c_v$. 
We show that it is possible to obtain a migration schedule using $\Delta'$ rounds.

\subsubsection{Outline of Algorithm}
We first present the outline of our algorithm when $c_v$ is even for any $v$.

\begin{enumerate}
\item[(1)] Construct $G'$ so that every node has degree exactly $c_v \Delta'$   
by adding dummy edges.
\item[(2)] Find a Euler cycle (EC) on $G'$. 
\item[(3)] Construct a bipartite graph $H$ by considering the directions of edges
obtained in $EC$. That is, for each node $v$ in $G'$, create two copies $v_{in}$
and $v_{out}$. For an edge $e = (u, v)$ in $G'$, if the edge is visited
from $u$ to $v$ in $EC$, then create an edge from $u_{out}$ to $v_{in}$ in $H$.
\item[(4)] We now decompose $H$ into $\Delta'$ components
by repeatedly finding a $c_v/2$-matching in $H$. 
\item[(5)] Let $M_1, M_2, \dots, M_{\Delta'}$ be the matching
obtained in Step (4). Then schedule one matching in each round.
\end{enumerate}

\subsubsection{Description and Analysis}\label{sec:detail_even}
We now describe the details and show that
the algorithm gives an optimal migration schedule.
\vspace{0.1in}
\noindent{Step (4):}
We now find a $c_v/2$-matching in $H$  where exactly $c_v/2$ edges are matched for each $v_{in}$
and $v_{out}$. We show the following lemma.
\begin{figure}[h]
\centering
\includegraphics[width=3in]{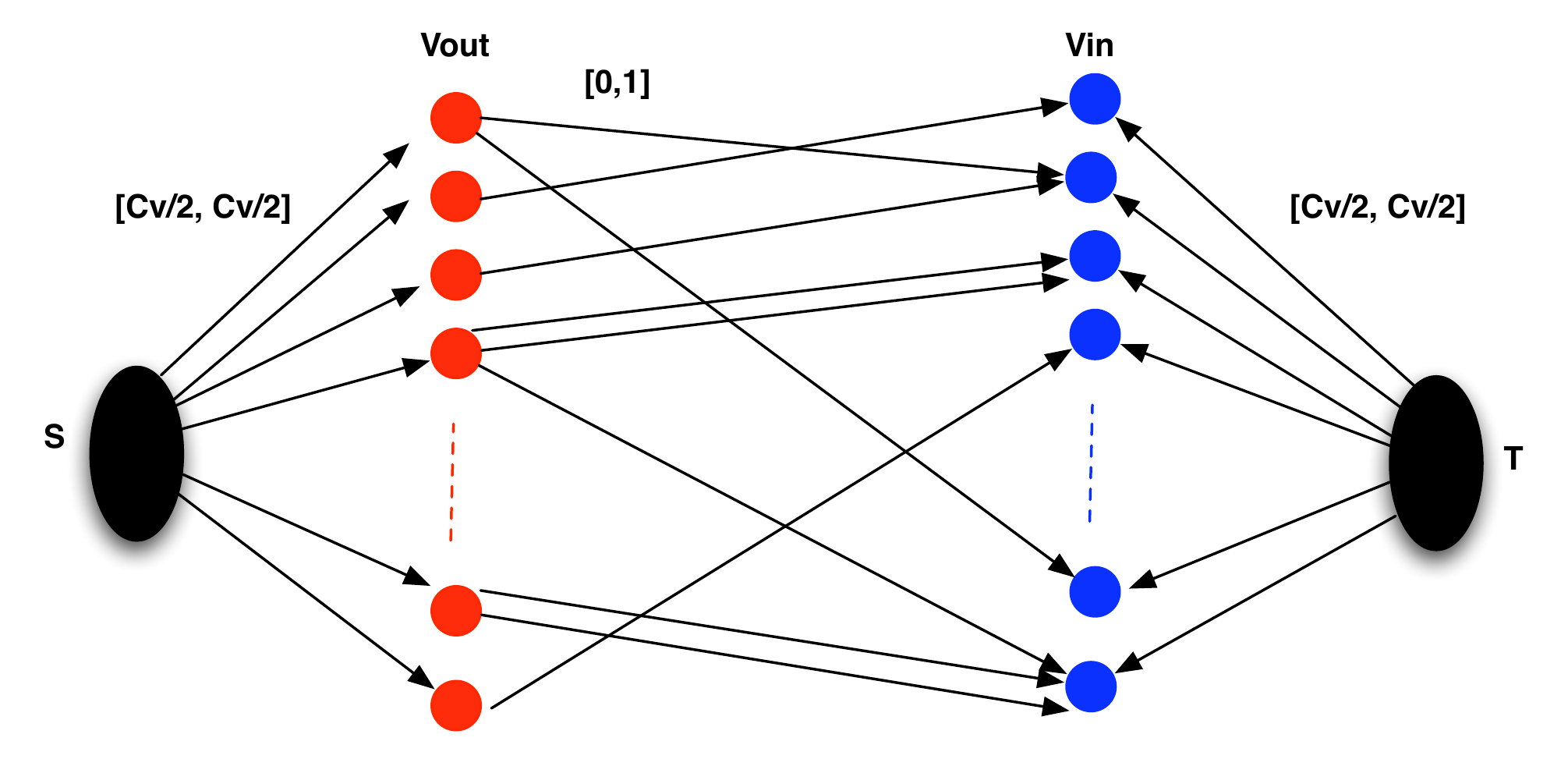}
\caption{Flow Network for Step (4)}
\label{fig:flow}
\end{figure}
\vspace{0.1in}

\vspace{0.1in}
\noindent{Step (1)-(3):} 
The first three steps  are a generalization of
Peterson's theorem. 
$G'$ can be constructed as follows.
For any node $v$ with degree less than $c_v \Delta'$, we add
loops until degree of the node becomes at least $c_v \Delta' - 1$.
Note that after the modification, the number of node with degree 
$c_v \Delta' - 1$ is even as $c_v$'s are even. Pair the nodes and add edges 
so that every node has degree $c_v \Delta'$.

Note that each node in $G'$ has even degree  as all $c_v$'s are even. 
Therefore, we can find a Euler cycle $EC$ on $G'$.
Note that  for each node $v$, there are $c_v \Delta'/2$ incoming edges
and $c_v \Delta'/2$ outgoing edges in $EC$ .

We construct a bipartite graph $H$ by considering the directions of edges
obtained in $EC$. For each node $v$ in $G'$, create two copies $v_{in}$
and $v_{out}$. For an edge $e = (u, v)$ in $G'$, if the edge is visited
from $u$ to $v$ in $EC$, then create an edge from $u_{out}$ to $v_{in}$ in $H$.
As each node $v$ in $G'$ has $c_v \Delta'/2$ incoming edges and
$c_v \Delta'/2$ outgoing edges in $EC$, the degrees of $v_{in}$ and
$v_{out}$  in $H$ is also $c_v \Delta'/2$.

We construct a bipartite graph $H$ by considering the directions of edges
obtained in $EC$. For each node $v$ in $G'$, create two copies $v_{in}$
and $v_{out}$. For an edge $e = (u, v)$ in $G'$, if the edge is visited
from $u$ to $v$ in $EC$, then create an edge from $u_{out}$ to $v_{in}$ in $H$.
As each node $v$ in $G'$ has $c_v \Delta'/2$ incoming edges and
$c_v \Delta'/2$ outgoing edges in $EC$, the degrees of $v_{in}$ and
$v_{out}$  in $H$ is also $c_v \Delta'/2$.

\vspace{0.1in}
\noindent{Step (4):}
We now find a $c_v/2$-matching in $H$  where exactly $c_v/2$ edges are matched for each $v_{in}$
and $v_{out}$. 

\begin{theorem}
We can find an optimal migration schedule when each node has even $c_v$.
\end{theorem}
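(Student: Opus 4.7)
The plan is to execute the five-step outline verbatim and verify that when every $c_v$ is even, each step can be carried out and the resulting schedule consists of exactly $\Delta' = LB_1$ rounds, which matches the lower bound and is therefore optimal.

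Steps (1)--(3) amount to a parity-careful generalization of Petersen's theorem. To build $G'$ from $G$, I attach $\lfloor (c_v \Delta' - d_v)/2 \rfloor$ loops at every deficient vertex $v$; each loop contributes $2$ to the degree, so $v$ ends up with degree $c_v \Delta'$ or $c_v \Delta' - 1$. Since $c_v \Delta'$ is even and the sum of degrees of any graph is even, the set of vertices still short by one has even cardinality; pair them up and add one edge per pair. Every vertex of $G'$ then has even degree $c_v \Delta'$, so a Euler circuit $EC$ exists, and orienting each edge along $EC$ gives every vertex exactly $c_v \Delta'/2$ incoming and $c_v \Delta'/2$ outgoing occurrences. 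Hence the bipartite graph $H$ of Step (3) satisfies $\deg_H(v_{\text{in}}) = \deg_H(v_{\text{out}}) = c_v \Delta'/2$, an integer because $c_v$ is even.

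Step (4), the decomposition of $E(H)$ into $\Delta'$ edge-disjoint $c_v/2$-regular bipartite subgraphs, is the main technical hurdle. A single such subgraph is a bipartite $f$-factor with $f(v_{\text{in}}) = f(v_{\text{out}}) = c_v/2$, and existence follows from the max-flow formulation in Figure~\ref{fig:flow}: source to each $v_{\text{out}}$ with capacity $c_v/2$, every edge of $H$ with capacity $1$, each $v_{\text{in}}$ to sink with capacity $c_v/2$. Feasibility is witnessed by the uniform fractional solution that places mass $1/\Delta'$ on each edge, and integrality of the bipartite $b$-matching polytope upgrades it to an integer factor $M_1$. Removing $M_1$ drops every degree to $(c_v/2)(\Delta' - 1)$ and the argument iterates $\Delta'$ times. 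A conceptually cleaner alternative is a vertex-splitting reduction to K\"onig's edge-coloring theorem: replace each $v_{\text{in}}$ and $v_{\text{out}}$ by $c_v/2$ copies and distribute the incident edges evenly, so each copy collects exactly $\Delta'$ edges. The resulting $\Delta'$-regular bipartite multigraph admits a proper $\Delta'$-edge-coloring by K\"onig, and merging copies yields $M_1, \dots, M_{\Delta'}$ in one shot.

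Step (5) then schedules $M_i$ in round $i$: $M_i$ selects $c_v/2$ edges at $v_{\text{in}}$ and $c_v/2$ at $v_{\text{out}}$, which pull back to $c_v$ distinct edges of $G'$ meeting $v$, respecting the constraint $c_v$ at every disk. Dummy loops and pairing edges added in Step (1) carry no real data and are simply dropped from the schedule, so all real migrations finish within $\Delta'$ rounds. Combined with $LB_1 = \Delta' \le OPT$, this yields $OPT = \Delta'$. The hardest point is establishing that the $\Delta'$-wise decomposition of Step (4) exists and can be computed in polynomial time; the vertex-splitting reduction handles both existence and a polynomial-time construction simultaneously, since K\"onig's theorem admits well-known constructive polynomial proofs.
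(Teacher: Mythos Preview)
Your proposal is correct and follows essentially the same route as the paper: you execute Steps (1)--(5) exactly as outlined, use the flow network of Figure~\ref{fig:flow} to extract a $c_v/2$-factor, iterate to obtain $M_1,\dots,M_{\Delta'}$, and pull back to a $\Delta'$-round schedule matching $LB_1$. The one addition is your K\"onig vertex-splitting alternative for Step~(4), which the paper does not mention but which is a perfectly valid and slightly slicker way to get the full decomposition in one stroke rather than by repeated factor extraction.
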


We can show the theorem by showing the following lemmas 

\begin{lemma}
There exists  a $c_v/2$-matching in $H$ and it can be found in polynomial time.
\end{lemma}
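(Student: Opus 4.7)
The plan is to prove existence of the $c_v/2$-matching by a flow argument, following the figure labeled ``Flow Network for Step (4)''. I would set up a directed network $N$ as follows: introduce a source $s$ and sink $t$; add an arc from $s$ to each $v_{out}$ with capacity $c_v/2$; keep each edge of $H$ as an arc from its $v_{out}$ endpoint to its $v_{in}$ endpoint, each of capacity $1$; and add an arc from each $v_{in}$ to $t$ with capacity $c_v/2$. Any integral $s$--$t$ flow in $N$ of value $F := \sum_v c_v/2$ saturates every source-side arc, and the set of middle arcs carrying one unit of flow is exactly a subgraph of $H$ in which each $v_{out}$ and each $v_{in}$ is incident to exactly $c_v/2$ edges, i.e.\ the desired $c_v/2$-matching.

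The task is therefore to show the max flow in $N$ equals $F$. The upper bound is immediate since the $s$-cut has capacity $F$. For the matching lower bound I would exhibit a fractional $s$--$t$ flow of value $F$ by assigning the value $1/\Delta'$ to every edge of $H$ and sending the appropriate amount on each source/sink arc. Recall from Steps (1)--(3) that in $H$, each $v_{in}$ and each $v_{out}$ has degree exactly $c_v\Delta'/2$ (this is the key fact inherited from the Euler cycle construction). Thus at each $v_{in}$ the incoming middle-arc flow totals $(c_v\Delta'/2)\cdot(1/\Delta') = c_v/2$, which matches the outgoing capacity to $t$; symmetrically for $v_{out}$. Capacity constraints on middle arcs are satisfied since $1/\Delta' \le 1$. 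This is a feasible fractional flow of value $F$, so the max flow is $\ge F$, hence exactly $F$.

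Because all capacities in $N$ are integral, the max-flow min-cut theorem guarantees an integral max flow of value $F$, computable in polynomial time by e.g.\ the Ford--Fulkerson or push--relabel algorithm. Reading off the middle arcs carrying one unit of flow produces the required $c_v/2$-matching.

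The main obstacle is really a bookkeeping one: making sure that the degree identities in $H$ ($\deg(v_{in}) = \deg(v_{out}) = c_v\Delta'/2$) are used correctly so that the uniform fractional flow is feasible. Once that is in place, integrality of bipartite/flow polyhedra does the rest. A slightly cleaner alternative I would mention in passing is to invoke the classical result that a bipartite graph admits an integer $b$-matching whenever the corresponding fractional $b$-matching polytope is nonempty; this avoids explicitly constructing $N$ but is essentially the same argument.
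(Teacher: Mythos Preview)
Your proposal is correct and matches the paper's approach: the paper also reduces to an integral max-flow computation on exactly the network you describe (as indicated by the figure captioned ``Flow Network for Step (4)''), using the degree identity $\deg(v_{in})=\deg(v_{out})=c_v\Delta'/2$ from the Euler-cycle construction to certify that the flow of value $\sum_v c_v/2$ is attainable. Your fractional-flow certificate and appeal to integrality are precisely the right ingredients.
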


\begin{lemma}
We can decompose $H$ into $M_1, M_2, \dots, M_{\Delta'}$
so that each $M_i$ is a $c_v/2$-matching in $H$.
\end{lemma}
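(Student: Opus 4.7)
The plan is to prove the decomposition by induction on $\Delta'$, using the preceding lemma (existence of a single $c_v/2$-matching in $H$) as the inductive engine.

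For the base case $\Delta' = 1$, each vertex $v_{in}$ and $v_{out}$ of $H$ has degree exactly $c_v/2$, so $H$ itself constitutes a $c_v/2$-matching and we set $M_1 := H$. For the inductive step with $\Delta' \ge 2$, I would invoke the previous lemma to extract a $c_v/2$-matching $M_{\Delta'}$ from $H$ and let $H' := H \setminus M_{\Delta'}$. Each vertex of $H'$ then has degree exactly $c_v \Delta'/2 - c_v/2 = c_v(\Delta'-1)/2$, which is the same structural property enjoyed by the original $H$ except with $\Delta'-1$ in place of $\Delta'$. Applying the existence argument to $H'$, we peel off another $c_v/2$-matching $M_{\Delta'-1}$, and continuing in this fashion produces the full decomposition $M_1, M_2, \dots, M_{\Delta'}$.

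The main obstacle I foresee is ensuring that the existence lemma applies to $H'$ and not merely to the particular $H$ constructed from the Euler cycle of $G'$. To handle this, I would verify that the proof of the previous lemma depends only on bipartiteness and balanced vertex degrees, not on the Euler-tour provenance of the edges. Concretely, in the natural flow network --- source $s$ with arcs of capacity $c_v/2$ to each $v_{out}$, unit-capacity arcs corresponding to the edges of $H'$, and arcs of capacity $c_v/2$ from each $v_{in}$ to sink $t$ --- the uniform fractional flow that pushes $\tfrac{1}{\Delta'-1}$ units along every edge of $H'$ is feasible precisely because every $v_{in}$ and $v_{out}$ still has balanced degree $c_v(\Delta'-1)/2$. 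Hence the max flow value equals $\sum_v c_v/2$, and integrality of max flow on a network with integer capacities yields an integral $c_v/2$-matching at each inductive step, closing the argument. If one prefers to avoid re-invoking the flow construction, an alternative is to apply the standard $f$-factor theorem for bipartite graphs directly to $H'$, which gives the same conclusion.
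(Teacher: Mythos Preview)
Your proposal is correct and follows essentially the same approach as the paper: the algorithm description in Step~(4) explicitly says we ``decompose $H$ into $\Delta'$ components by repeatedly finding a $c_v/2$-matching in $H$,'' and the accompanying flow-network figure indicates that the existence lemma is proved via max flow, exactly as you outline. Your observation that the flow/existence argument only uses bipartiteness and the balanced degree condition (and hence applies to each residual graph $H'$) is precisely the point needed to justify the iteration, and matches the paper's intent.
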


\begin{lemma}
Each component $M_i$ can be scheduled in one round.
\end{lemma}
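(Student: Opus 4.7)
The plan is to show that when we map each matching $M_i$ back from the bipartite graph $H$ to the graph $G$, every node $v$ ends up incident to at most $c_v$ real migration edges, which is exactly the per-round capacity constraint.

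First I would unpack what a $c_v/2$-matching in $H$ means for the original graph. Every edge of $H$ is derived from an edge $e$ of $G'$ oriented in the direction prescribed by the Euler cycle: a non-loop edge $(u,v)$ traversed from $u$ to $v$ becomes an edge $u_{out} v_{in}$ in $H$, while a loop at $v$ becomes an edge $v_{out} v_{in}$. So each edge of $G'$ corresponds to exactly one edge of $H$, and the correspondence is a bijection.

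Next I would count degrees in $M_i$ after pulling back to $G'$. Fix a vertex $v$, and let $x_{in}$ and $x_{out}$ be the number of non-loop edges of $M_i$ entering and leaving $v$, and let $y$ be the number of loop edges of $M_i$ at $v$. Since $v_{in}$ is incident to exactly $c_v/2$ edges of $M_i$ in $H$, and these come from either non-loop edges pointing into $v$ or loops at $v$, we get $x_{in} + y = c_v/2$. Similarly $x_{out} + y = c_v/2$. The degree of $v$ in the subgraph of $G'$ corresponding to $M_i$ is $x_{in} + x_{out} + 2y = c_v$. Hence every node has degree exactly $c_v$ in the $G'$-preimage of $M_i$.

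Finally, the round schedule uses only the real migration edges of $G$, not the dummy loops or the edges added in Step (1) to equalize degrees. Deleting those dummy edges can only decrease the degree at each $v$, so each node is incident to at most $c_v$ real edges of $M_i$. This is precisely the condition that $M_i$ can be performed in a single round, since each disk $v$ can handle up to $c_v$ simultaneous transfers. The only subtle point — and the one I would be most careful about — is accounting for loops correctly, because a loop contributes $1$ to each of $v_{in}$ and $v_{out}$ in $H$ but $2$ to the degree in $G'$; the calculation above handles this correctly and is the reason the bound $c_v$ is tight rather than $c_v/2$.
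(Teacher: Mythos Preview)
Your proof is correct and follows exactly the approach implicit in the paper's construction: each $M_i$ gives $v_{in}$ and $v_{out}$ degree $c_v/2$ in $H$, so merging back yields degree $c_v$ at $v$ in $G'$, and discarding dummy edges only lowers this. Your explicit loop accounting is sound; note it could be shortened by observing that for every edge of $G'$ the contribution to $\deg_{G'}(v)$ equals the sum of its contributions to $\deg_H(v_{in})$ and $\deg_H(v_{out})$, so $\deg_{G'}(v) = \deg_H(v_{in}) + \deg_H(v_{out}) = c_v$ directly, without splitting into the $x_{in}, x_{out}, y$ cases.
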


\subsection{Soft Edge Coloring - General Case}
In this section, we consider the case that each node $v$ has an arbitrary $c_v$.
The problem is \textsf{NP}-hard even when $c_v = 1$ for all nodes.
We develop a soft edge coloring algorithm that colors edges of the given graph so that  
the transfer constraints $c_v$ of the nodes are satisfied. 
The coloring defines a data migration schedule and, 
as the number of colors used determines the number of rounds in our schedule, 
we would like our coloring algorithm to minimize the number of colors needed.
We obtain an algorithm that uses at most $OPT + \sqrt{OPT}$ colors

\subsubsection{Outline of the Algorithm}
We first  give an overview of the coloring algorithm. 
Our algorithm is inspired by the recent work for multi-graph edge coloring algorithm by
Sanders and Steurer~\cite{sanders05} and generalized their algorithm.
Our algorithm uses three particular subgraph structures, balancing orbits, color orbits and edge orbits,
which is defined Section~\ref{sec:struct}. 
The latter two structures --- color orbits and edge orbits --- are generalizations of 
the structures used by Sanders and Steurer~\cite{sanders05}.

The algorithm starts with a naive partial coloring of $G = (V, E)$ and proceeds in two phases. 
In the first phase, we use three structures and color edges until we produce 
a simple uncolored subgraph $G_0$  (Section \ref{sec:bad})
consisting of small connected components (Section \ref{sec:size}); in the second phase 
we color $G_0$ and show that $O(\sqrt{{d_v(G_0)}/{\min c_v}})$ new colors
are enough to obtain a proper coloring in $G_0$ (Section \ref{sec:simple}).

\subsubsection{Preliminaries} \label{sec:struct}

We first introduce some definitions.
Let $|E_i(v)|$ be the number of edges of color $i$ adjacent to a vertex $v$.
\begin{definition}[Strongly/lightly missing color]
Color $c$ is \emph{saturated} at vertex $c$ if $|E_c(v)| = c_v$. The color $c$ is \emph{missing} at vertex $v$ if $|E_c(v)|$ is less than $c_v$; in this case we distinguish two possibilities: 
\begin{itemize}
\item $c$ is \emph{strongly missing} if $|E_c(v)| < c_v -1$. 
\item $c$ is \emph{lightly missing} if $|E_c(v)| = c_v -1$.
\end{itemize}
\end{definition} 


We will reuse definition \ref{def:path} for altenating paths but  unlike the case when $c_v = 1$, an alternating path may not be a simple path  as there can be multiple edges with the same color incident to a node.

\paragraph{Balancing Orbits}
\vspace{5pt}
We first define balancing orbits as follows.
\begin{definition}[balancing orbit] 
\label{def:balancing}
A \emph{balancing orbit} $O$ is a node induced subgraph such that 
all nodes $V(O)$ are connected by uncolored edges and the following property holds
\begin{itemize}
\item A vertex $v \in V(O)$ is strongly missing a color.
\item There are at least two nodes $u,v \in V(O)$ lightly missing the same color.
\end{itemize}
\end{definition}
\begin{figure}
\centering
\includegraphics[scale = 0.45]{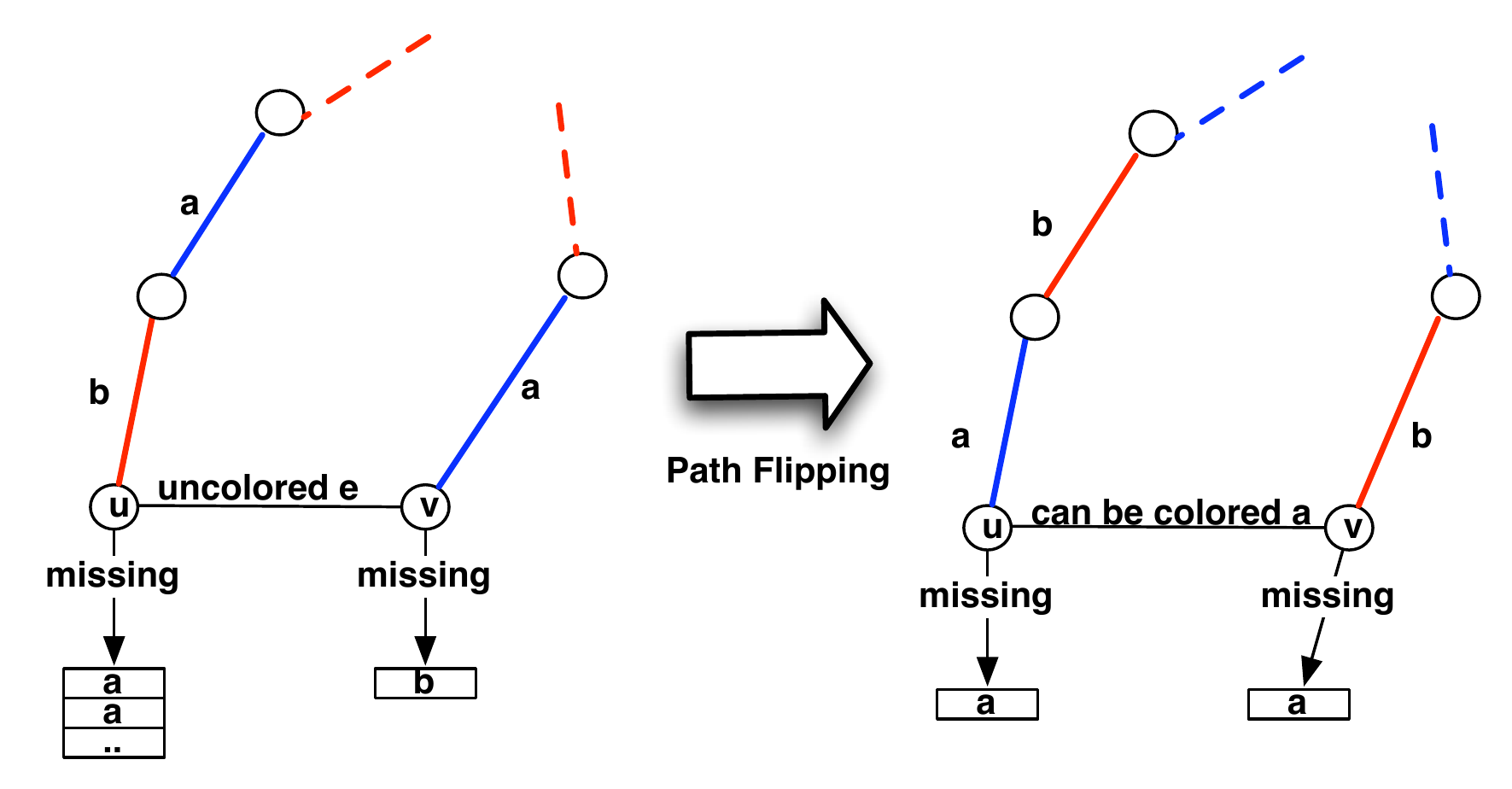}
\caption{$u$ strongly missing $a$ and path $P$ ends at $v$, we can color $e$ with $a$}
\label{fig:bal}
\end{figure}
\vspace{0.2in}
The following lemma shows that if we have a balancing orbit, we can color
an uncolored edge and eventually remove any balancing orbits.
\begin{lemma}\label{lemma:balancing}
If there is a balancing orbit in $G$, 
then we can color a previously uncolored edge.
\end{lemma}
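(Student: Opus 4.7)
The plan is to combine the connectivity of $V(O)$ via uncolored edges with an $ab$-alternating swap argument in the spirit of Vizing's theorem, with $a$ being the color strongly missing at some $u\in V(O)$ and $b$ being the color lightly missing at two distinct vertices $u_1,u_2 \in V(O)$. The high level goal is to modify the partial coloring so that some uncolored edge of $V(O)$ has both endpoints missing a common color, at which point that edge can be colored.

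First I would dispose of trivial cases: if some uncolored edge $e=(p,q)\subseteq V(O)$ already has a common missing color $c$ at both endpoints, color $e$ with $c$ and stop. Likewise, if $a=b$ and $u$ is adjacent via an uncolored edge to either $u_1$ or $u_2$, color that edge with $a=b$. So assume none of these hold; in particular, for any uncolored edge $e=(u,q)$ incident to $u$ inside $V(O)$, the color $a$ is saturated at $q$ (since $u$ misses $a$). Fix one such edge $e=(u,q)$; it exists because $V(O)$ is connected by uncolored edges and $|V(O)|\ge 2$.

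The core step is then to build an $ab$-alternating walk $P$ starting at $q$ with first edge colored $a$ (possible since $a$ is saturated at $q$) and argue that $P$ can be chosen to terminate at one of $u_1,u_2$. The role of strongly-missing at $u$ is that $u$ has at least two units of $a$-deficiency: even after an $ab$-swap along $P$, $u$ keeps missing $a$, so $u$ is never a genuine obstruction to extending the walk from the $a$-side at intermediate visits to $u$. The role of having \emph{two} lightly-missing vertices $u_1,u_2$ is the classical Vizing dichotomy: a maximal $ab$-alternating walk from $q$ must end at a vertex where either an $a$- or $b$-edge is unavailable, hence at a vertex missing $a$ or $b$; the two $b$-light vertices give two target endpoints, so the walk can be steered (or truncated to a subpath) ending at one of them without passing back through itself in a blocking way. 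I would then flip the $ab$-colors along $P$. Because the terminal vertex was only \emph{lightly} missing $b$, flipping turns its one $b$-deficiency into an $a$-deficiency of the same amount, so no color becomes over-saturated; similarly at $q$, the single $a$-edge that is removed makes $a$ lightly missing at $q$. Meanwhile $u$ still misses $a$ by the strong-missing slack.

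After the flip, both endpoints of the uncolored edge $e=(u,q)$ miss color $a$, so we color $e$ with $a$ and finish. The main obstacle I expect is making the alternating-walk argument go through in the multigraph setting, where a walk may revisit vertices and is not a simple path; specifically, showing that one can always truncate or reroute the walk to end at $u_1$ or $u_2$ without creating a forbidden configuration. This is exactly where the two structural guarantees of a balancing orbit — one \emph{strongly} missing vertex (providing extra $a$-slack at $u$) together with \emph{two} lightly missing vertices for color $b$ (providing two independent terminal targets) — are used in an essential way, ruling out the bad case in which a single-target Vizing swap would loop back to $u$ and block progress.
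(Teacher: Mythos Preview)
Your plan has a real gap, and it stems from misreading which hypothesis does the work. The two bullets in the paper's definition of a balancing orbit are not both needed here; the second one (two vertices lightly missing the same color) is the \emph{color-orbit} condition handled by Lemma~\ref{lemma:balancing2}. The algorithm text makes this explicit: ``If $U$ contains a vertex that is strongly missing a color then use Lemma~\ref{lemma:balancing}\dots''. The balancing-orbit lemma is driven solely by the strongly-missing vertex.

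Concretely, your argument breaks at the choice of $b$. You set $b$ to be the color lightly missing at $u_1,u_2$ and then start an $ab$-walk at $q$ with an $a$-edge. But for the flip along that walk to be legal at $q$, you need $q$ to be \emph{missing} $b$ (otherwise flipping the first edge from $a$ to $b$ pushes $|E_b(q)|$ above $c_q$). Nothing ties the color lightly missing at $u_1,u_2$ to a color missing at $q$, so the swap can create an over-full class at $q$. Second, your claim that the walk ``can be steered (or truncated)'' to terminate at $u_1$ or $u_2$ is not justified: a maximal $ab$-walk terminates wherever the appropriate missing condition first occurs, and there are in general many vertices missing $a$ or $b$; the presence of two $b$-light vertices does not force termination at either of them.

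The paper's approach (cf.\ the figure caption ``$u$ strongly missing $a$ and path $P$ ends at $v$, we can color $e$ with $a$'') is simpler and avoids both issues. Take the uncolored edge $e=(u,w)$ incident to the strongly-missing vertex $u$, let $b$ be any color missing at $w$ (such a color exists since $w$ has an uncolored incident edge), and let $P$ be the $ab$-path from $w$ starting with an $a$-edge. Flip $P$. At $w$ this is legal because $w$ misses $b$, and it makes $w$ miss $a$. The only danger is that $P$ ends at $u$, adding one $a$-edge there; but ``strongly missing'' means $|E_a(u)|\le c_u-2$, so after the flip $u$ still misses $a$. Now both endpoints of $e$ miss $a$ and we color $e$ with $a$. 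The two-lightly-missing hypothesis is never invoked.
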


\paragraph{Color Orbits and Edge Orbits}
In this section, we define two subgraph structures: a \emph{color orbit} and an \emph{edge orbit}, 
which are basically generalizations of the structures defined in \cite{sanders05}.

\begin{definition}[Color orbit]
\label{def:color}
A \emph{color orbit} $O$ is a node induced subgraph such that
all nodes $V(O)$ are connected by uncolored edges and the following property holds
\begin{itemize}
\item There are at least two nodes $u,v \in V(O)$ lightly missing the same color.
\end{itemize}

\end{definition}

\begin{lemma}\label{lemma:balancing2}
\cite{sanders05} If there exists a color orbit in $G$ then we can color a previously uncolored edge.
\end{lemma}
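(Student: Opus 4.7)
The plan is to prove the lemma by induction on the length of a shortest uncolored walk between the two lightly-missing vertices $u$ and $v$ of $O$. Since $V(O)$ is connected by uncolored edges, fix such a walk $u = w_0, w_1, \dots, w_\ell = v$. For the base case $\ell = 1$, the edge $(u, v)$ itself is uncolored and we color it with $a$: since both $u$ and $v$ lightly miss $a$, the counts $|E_a(u)|$ and $|E_a(v)|$ rise from $c_u - 1$ and $c_v - 1$ to exactly $c_u$ and $c_v$, respecting the capacities.

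For $\ell \ge 2$, I would first try to color $(u, w_1)$ with $a$. If $a$ is missing at $w_1$, this succeeds immediately, since $u$ also lightly misses $a$. Otherwise $a$ is saturated at $w_1$, and because $w_1$ is incident to an uncolored edge, some color $b \ne a$ must be missing at $w_1$. Let $P$ be the maximal $ab$-alternating path starting at $w_1$ through an $a$-edge (which exists since $a$ is saturated at $w_1$). Flipping $P$ preserves $|E_a(\cdot)|$ and $|E_b(\cdot)|$ at every interior vertex---each interior visit contributes one $a$-edge and one $b$-edge and the flip merely swaps them---while at $w_1$ it converts one $a$-edge into a $b$-edge, so that $a$ becomes lightly missing at $w_1$. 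If $P$ does not terminate at $u$, then $u$'s color profile is untouched by the flip, so after the flip both $u$ and $w_1$ have an $a$-slot free, and we color $(u, w_1)$ with $a$ to finish.

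The final case is when $P$ terminates at $u$---necessarily via a $b$-edge by Definition~\ref{def:path}, since $u$ is already missing $a$. Flipping $P$ saturates $a$ at $u$, which \emph{would} be fatal in the classical Vizing argument; but it also makes $a$ lightly missing at $w_1$. Moreover $v$ is not a terminus of $P$ (the two termini are $w_1$ and $u$, neither equal to $v$ for $\ell \ge 2$), so $v$'s color counts are unchanged and $v$ still lightly misses $a$; and every edge of $P$ is colored, so the uncolored walk $w_1, w_2, \dots, w_\ell$ survives intact. The induced subgraph on $\{w_1, \dots, w_\ell\}$ therefore forms a new color orbit of strictly shorter walk length $\ell - 1$, and by the inductive hypothesis a previously uncolored edge in it can be colored. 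The main obstacle to spot is this ``transfer'' of the lightly-missing status from $u$ to $w_1$: the color-orbit hypothesis, by providing a second lightly-missing endpoint $v$, is precisely what turns the classical Vizing obstruction into a strictly smaller instance of the same problem.
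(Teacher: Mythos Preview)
The paper does not actually supply a proof of this lemma; it is quoted directly from Sanders and Steurer~\cite{sanders05}, and the surrounding machinery (color orbits, edge orbits, witnesses) is a generalization of their framework. There is therefore no in-paper argument to compare against.

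Your inductive Vizing-type path-flip is the right idea and does go through, but one step needs repair. When $P$ terminates at $u$ you assert that the terminal edge is ``necessarily via a $b$-edge by Definition~\ref{def:path}, since $u$ is already missing $a$.'' That inference is valid in the classical $c_v=1$ setting, where ``$u$ misses $a$'' forces $|E_a(u)|=0$. Here, however, ``lightly missing $a$'' only means $|E_a(u)|=c_u-1$, which can be positive; the maximal alternating trail may well reach $u$ along an $a$-edge and stop because every $b$-edge at $u$ has already been consumed by earlier interior passes of the trail. Definition~\ref{def:path} is a \emph{definition} of an $ab$-path's endpoint pattern, not a lemma guaranteeing that every maximal alternating trail satisfies it in the heterogeneous-capacity setting.

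Fortunately this does not break the argument, only the case split. If the final edge into $u$ carries color $a$, then flipping $P$ \emph{decreases} $|E_a(u)|$ by one (the terminal $a$-edge becomes a $b$-edge), so $u$ becomes strongly missing $a$ while $w_1$ becomes missing $a$ as before; you can then color $(u,w_1)$ with $a$ immediately and terminate without invoking the induction hypothesis. It is only in the genuine $b$-terminal case that the flip saturates $a$ at $u$, forcing the fallback to the shorter color orbit on $\{w_1,\dots,w_\ell\}$. With this two-case correction in place your proof is complete.
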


By Lemma \ref{lemma:balancing} and \ref{lemma:balancing2}, 
whenever we find a balancing orbit or color orbit, we can color a previously 
uncolored edge and make progress.
If neither of properties in Definition \ref{def:balancing} and \ref{def:color}
hold, we call $O$ a \emph{tight color orbit}.

\vspace{5pt}
Our goal at the end of Phase 1 of the algorithm is to get a \emph{simple} uncolored graph $G_0$ consisting of small connected components.
That is, in $G_0$ there cannot be more than one uncolored edges between two nodes. 
In order to eliminate parallel uncolored edges the following subgraph structure is used.

\begin{definition}[Lean and bad edges]
If an edge $e$ is colored and all its parallel edges are colored then $e$ is a \emph{lean} edge. 
If $e$ is uncolored and has a parallel uncolored edge then $e$ is a \emph{bad} edge.
\end{definition}

\begin{definition}
An \emph{edge orbit} is a subgraph consisting of two uncolored parallel edges (called the \emph{seed} of the edge orbit) and then is inductively defined as follows:
Let $e = (x,y)$ be an edge in the edge orbit $O$, let $a$ and $b$ be missing colors at $x$ and $y$ respectively and let $P$ be the alternating path starting at $x$ then $O \cup P$ is an edge orbit if
\begin{itemize}
\item no edge of color a or b is contained in $O$.
\item $\exists v \in P$ that was not in the vertex set of $O$.\\
\end{itemize}
\end{definition}
If edge orbit $O$ has a lean edge then $O$ is called a weak edge orbit otherwise $O$ is a tight edge orbit. A color $c$ is free for an edge orbit $O$ if $O$ does not contain an edge with color $c$. 

The following lemma from \cite{sanders05} states that if in some coloring of $G$, there exists a weak edge orbit then we make progress toward our goal of obtaining $G_0$ by either coloring a previously uncolored edge or by uncoloring a lean edge and coloring a bad edge.
\begin{lemma}\label{lemma:weakedge}
\cite{sanders05} If a coloring of $G$ contains a weak edge orbit then we can either color a previously uncolored edge or we can uncolor a lean edge and color a bad edge.\\
\end{lemma}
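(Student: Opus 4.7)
The plan is to exploit the inductive construction of the edge orbit $O$ as a sequence of alternating-path extensions from its seed. Write the seed as a pair of parallel uncolored (bad) edges $e_1,e_2$ between vertices $u,v$, and let $O$ be obtained from the seed by successive additions $P_1,P_2,\ldots,P_t$, where each $P_i$ is an $(a_i,b_i)$-alternating path attached to an edge already in $O$, and where (by the definition of edge orbit) neither color $a_i$ nor $b_i$ appears in $O$ before $P_i$ is added. In particular, every edge of $O$ of a given color lies inside a single $P_i$. Let $e^*\in O$ be a lean edge of color $c^*$, and let $P_j$ be the extension that introduced $c^*$ (so $c^*\in\{a_j,b_j\}$).

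The first move is to uncolor $e^*$. Because $e^*$ is lean, all its parallel edges remain colored, so no new parallel-edge conflict is created; the uncoloring merely makes $c^*$ (at least) lightly missing at both endpoints of $e^*$. I would then reverse the construction of $O$, flipping the extensions in the reverse order $P_t,P_{t-1},\ldots,P_1$. For the path $P_j$ containing $e^*$, the newly freed $c^*$-slot is pushed along $P_j$ by flipping its $(a_j,b_j)$-alternating segments, so that the missing-color pattern at the anchor edge of $P_j$ is the same as it was just before $P_j$ was added, plus one extra missing slot. That extra slot is then propagated backwards through $P_{j-1},\ldots,P_1$ by the same kind of alternating flip, each step being legal because the colors $(a_i,b_i)$ of each $P_i$ were unused elsewhere in $O$. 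At the end, we arrive at the seed with a common missing color at $u$ and $v$, and can color one of $e_1,e_2$: this realizes the ``uncolor a lean edge, color a bad edge'' alternative of the lemma.

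If at any intermediate stage the flipping creates either a balancing orbit (Definition \ref{def:balancing}) or a color orbit (Definition \ref{def:color}) in the current coloring, I would instead stop and invoke Lemma \ref{lemma:balancing} or Lemma \ref{lemma:balancing2} to color a previously uncolored edge directly; this realizes the first alternative of the lemma. Thus in every case we either color a previously uncolored edge outright, or trade the lean $e^*$ for one of the bad seed edges, which is exactly what is claimed.

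The main obstacle will be to verify carefully that the reverse flipping really terminates at the seed in the multigraph setting: one must show that each $(a_i,b_i)$-alternating path used in a flip exists and ends where required, and that the flips do not interact with the uncolored $e^*$ or with each other. The freshness condition that $a_i,b_i$ do not appear in $O$ before $P_i$ is exactly what isolates consecutive extensions, so the flips can be analyzed one at a time; the remaining care is purely bookkeeping, analogous to the argument of Lemma \ref{flipping} adapted to allow parallel edges.
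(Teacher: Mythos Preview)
The paper does not supply its own proof of this lemma: it is quoted verbatim from \cite{sanders05} and left as a citation, so there is nothing in the manuscript to compare your argument against directly. What you have written is essentially a sketch of the Sanders--Steurer argument (uncolor the lean edge, push the freed slot back to the seed along the orbit's construction via alternating flips, and color a seed edge), and at that level your strategy is correct and matches theirs.

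There is, however, a structural inaccuracy in your sketch that you should fix before calling it a proof. You model the orbit as a linear chain $P_1,\ldots,P_t$ and speak of propagating the freed slot ``backwards through $P_{j-1},\ldots,P_1$''. But by the inductive definition, each extension $P_i$ is anchored at some edge of the orbit built so far, not necessarily at $P_{i-1}$; the extensions form a tree rooted at the seed, not a chain. Consequently the correct back-propagation follows the unique tree path from $P_j$ down to the seed, and the extensions not on that path are irrelevant and should not be flipped. Your statement ``flipping the extensions in the reverse order $P_t,P_{t-1},\ldots,P_1$'' is therefore both unnecessary and potentially harmful, since flips of side branches can disturb the missing-color pattern at shared vertices. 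The freshness condition you cite (that $a_i,b_i$ are absent from $O$ before $P_i$) guarantees color-disjointness of the $P_i$'s, but it does \emph{not} guarantee vertex-disjointness, so interactions at shared vertices must still be handled. Once you restrict attention to the root-path of $P_j$ and argue one step at a time (uncolor $e^*$, flip the appropriate segment of $P_j$ to move the slot to its anchor, then repeat at the parent extension), the bookkeeping goes through as in \cite{sanders05}; your fallback to Lemma~\ref{lemma:balancing} or~\ref{lemma:balancing2} when a balancing/color orbit appears midway is also exactly what is done there.
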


A tight edge orbit does not have lean edges so its vertex set is connected 
by uncolored edges and thus a tight edge orbit is one of the following 
--- a balancing orbit, color orbit or a tight color orbit. 
When it is a tight color orbit, as we cannot make progress toward $G_0$, 
which we call a \emph{hard orbit}. 
Note that no vertex in a hard orbit is strongly missing a color, no two
nodes are lightly messing the same color, and no edge in a hard orbit is lean.

\paragraph{Growing Orbits}
A color $c$ is \emph{full} in a hard orbit $O$ if $c$ is saturated on all vertices of $V(O)$ but at most one vertex in $V(O)$ is lightly missing $c$ or equivalently if $|E_c \cap E(V(O))| \geq  {\lfloor \frac{\sum_{v \in V(O)} c_v}{2} \rfloor}. $
So if color $c$ is full in a hard orbit $O$ it cannot be used to color uncolored edges whose endpoints are in $O$.
\begin{definition}[Lower bound witnesses]
A hard orbit is a $\Delta'$-witness if all missing colors at some node are non-free. It is a $\Gamma'$-witness if all free colors of the orbit are full.
\end{definition}
The intuition behind the witnesses is the following. Suppose very few colors are used in hard orbit $O$, in the case of $\Gamma'$-witness almost all color classes are full in $O$ and in the case of a $\Delta'$-witness almost all available colors are strong on some node $v \in V(O)$. So a witness in some coloring using $q$ colors indicates that it is \textit{almost} impossible to color an additional uncolored edge using the available $q$ colors and thus the number of available colors needs to be increased.  
\begin{lemma}\label{lemma:groworbit}
\cite{sanders05} Given a hard orbit in some coloring we can either find a witness or compute a larger edge orbit.
\end{lemma}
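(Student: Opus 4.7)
The plan is to prove the contrapositive: assuming the given hard orbit $O$ is neither a $\Delta'$-witness nor a $\Gamma'$-witness, I will construct a strictly larger edge orbit. By hypothesis, at every vertex $v \in V(O)$ some missing color is free for $O$, and there exists a free color $c$ that is not full in $O$, so $|E_c \cap E(V(O))| < \lfloor \sum_{v \in V(O)} c_v / 2 \rfloor$. Rewriting the second inequality as a degree sum over $V(O)$, one obtains both that some vertex of $V(O)$ has color $c$ strongly or lightly missing and, crucially, that at least one edge of color $c$ has exactly one endpoint in $V(O)$ (otherwise the edge count would saturate against the available degree inside $V(O)$).

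Starting from any seed edge $e = (x,y)$ of $O$, I pick missing colors $a$ at $x$ and $b$ at $y$ that are both free for $O$; such a pair exists because no $\Delta'$-witness is present. I then trace the alternating $ab$-path $P$ from $x$ prescribed by the edge-orbit definition. If $P$ ever visits a vertex outside $V(O)$, the subgraph $O \cup P$ is already a strictly larger edge orbit and we are done. Otherwise $P$ is trapped inside $V(O)$; in that case I reroute using the slack color $c$, switching the alternation to $c$ at a vertex where $c$ is missing. The edge of color $c$ guaranteed above to leave $V(O)$ must then appear along the rerouted alternating walk, producing a new vertex outside $V(O)$ and hence an enlargement of the orbit.

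The main obstacle is the bookkeeping in the rerouting step: one must check that the alternating path used to leave $V(O)$ respects the edge-orbit definition, namely that it contains no edge whose color already appears in $O$ and that its endpoints exhibit the prescribed missing/non-missing pattern of Definition \ref{def:path}. This is essentially the argument of Sanders and Steurer \cite{sanders05} for the unit-capacity multigraph case; the generalization to arbitrary $c_v$ requires only replacing "missing" by the strongly/lightly missing variants introduced earlier in the section and verifying that the relevant counting inequalities still hold, which is direct because both sides of the $\Gamma'$-witness inequality scale linearly with the capacities $c_v$, and the alternating-path analysis is insensitive to multiplicity as long as the free-color bookkeeping is maintained.
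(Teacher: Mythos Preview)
The paper does not prove this lemma; it simply attributes the statement to \cite{sanders05}. So there is no in-paper argument to compare against, and your proposal has to stand on its own.

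Your overall shape is right --- argue the contrapositive and exploit both non-witness hypotheses --- and the degree count that extracts a $c$-colored edge leaving $V(O)$ from the not-$\Gamma'$ assumption is correct. The gap is the ``rerouting'' step. The growth rule for an edge orbit is rigid: you pick one edge $e=(x,y)\in O$, one free missing color $a$ at $x$, one free missing color $b$ at $y$, and then follow \emph{the} $ab$-alternating path from $x$. A walk that switches its alternation color at an interior vertex is not an $ab$-path in the sense of Definition~\ref{def:path}, so $O\cup P$ would not satisfy the inductive definition of an edge orbit. Your rerouted object is therefore not a larger edge orbit in the required sense.

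What the argument actually needs is a \emph{single} choice of $(e,a,b)$ for which the resulting alternating path is forced to leave $V(O)$. The escaping $c$-edge tells you where to anchor that choice --- take $e$ incident to the escape vertex $u$ and try to make $c$ one of the two alternation colors so that $(u,w)$ with $w\notin V(O)$ is picked up by the path. The subtlety your sketch elides is that $c$ need not be missing at either endpoint of $e$: in a hard orbit at most one vertex of $V(O)$ lightly misses $c$, and the escape vertex $u$ may be saturated in $c$. Covering that case is exactly where the careful argument of \cite{sanders05} does real work; an informal ``switch to $c$ where $c$ is missing'' does not substitute for it.
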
 

\subsubsection{Algorithm}\label{sec:alg}
The algorithm proceeds in two phases. The outcome of the first phase would be $G_0$, a simple uncolored graph with no large components. 
The following procedure for the first phase 
eliminates all the bad edges in $G$ (Section \ref{sec:bad}) and reduce the size of connected components 
(Section \ref{sec:size}), which gives $G_0$ with the desired properties.
In the second phase (Section \ref{sec:simple}), we color the remaining subgraph $G_0$.

\paragraph{Eliminating bad edges}
\label{sec:bad}
Given a partial coloring using $q$ colors, we iterate over a list of bad edges and we execute the following steps (Note a bad edge is a trivial edge orbit).
Given an edge orbit $O$ 
\begin{enumerate}
\item[(1)] If nodes of $O$ form a balancing or color orbit, 
apply Lemma \ref{lemma:balancing} or \ref{lemma:balancing2}.
\item[(2)] If $O$ is weak, apply Lemma \ref{lemma:weakedge}.
\item[(3)] If $O$ is a hard orbit, apply Lemma \ref{lemma:groworbit}.
\begin{enumerate}
\item If Lemma \ref{lemma:groworbit} gives a larger edge orbit $O \cup P$, repeat with $O = O \cup P$.
\item If Lemma \ref{lemma:groworbit} gives a witness then increase $q$ by one color and color the bad edges in the seed with the additional color.
\end{enumerate}
\end{enumerate}
\vspace{11 pt}
The output of this procedure is a simple subgraph $G'$ of $G$ induced by uncolored edges. 
In Lemma \ref{lemma:witness1} and Lemma \ref{lemma:witness2},
we show an upper bound on the number of used colors if there is a $\Delta'$ or $\Gamma'$-witness. The next procedure reduces the size of the connected components of $G'$ whenever $G'$ has balancing or color orbits.

\paragraph{Reducing size of connected components}
\label{sec:size}
For every connected component $U$ of $G'$, 
\begin{enumerate}
\item If $U$ contains a vertex that is strongly missing a color then use Lemma \ref{lemma:balancing} to color an uncolored edge.
\item If $U$ contains two or more vertices that are lightly missing the same color use Lemma \ref{lemma:balancing2} to color an uncolored edge.
\end{enumerate}
So at the end of the first phase we have the simple subgraph $G_0$ where for every connected component $U$ of $G_0$,
 no vertex is strongly missing a color and no two vertices of $U$ miss the same color. 
 In Lemma \ref{lemma:size}, we show that the size of $G_0$ is no more than $\frac{q + 2}{q  - \Delta' + 2}$.
\paragraph{Coloring $G_0$} \label{sec:simple}
Phase 2 colors $G_0$. We use  only $ \max_v \lceil \frac{d_v(G_0)}{c_v}  \rceil + 1 $ colors. The procedure goes as follows: 
\begin{enumerate}
\item Create $c_v$ copies of each vertex $v$ and distribute the edges over the copies so that each vertex is adjacent to at most 
$\lceil \frac{d_v(G_0)}{c_v} \rceil $ edges where $d_v(G_0)$ represents the degree of $v$ in $G_0$.
\item Use Vizing's algorithm to properly color each component. We need at most $\max_v \lceil \frac{d_v(G_0)}{c_v} \rceil $ + 1 colors.
\item Contract the copies back to $v$ getting a coloring where for any node $v$ there is no more than
$c_v$ edges of the same color. 
\end{enumerate}

\subsubsection{Analysis}\label{sec:analysis}
In the following $q$ denotes the total number of colors available for the algorithm. We show that the algorithm colors all the edges of $G$ using at most $q = OPT + \Theta(\sqrt{OPT})$ colors.
We first bound the number of used colors when there is a $\Delta'$ or $\Gamma'$-witness.
In particular, we show that when there exists a $\Gamma'$-witness, 
the total number of colors is a constant more than $OPT$ and does not depend on the size of $|V(O)|$,
which is tighter than the analysis given in \cite{sanders05}.

\begin{lemma}\label{lemma:witness1}
Let $O$ be a hard orbit. If $O$ is a $\Delta'$-witness then
$ q \leq \Delta' + \frac{2|V(O)| -4}{c^-}$
where  $c^- = \min_{v \in V(O)} c_v$. 
\end{lemma}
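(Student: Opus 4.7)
The plan is to establish two complementary bounds on $m_v$, the number of colors lightly missing at the $\Delta'$-witness vertex $v$, and then combine them.

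\emph{Lower bound on $m_v$.} Since $O$ is a hard orbit, no color is strongly missing anywhere in $V(O)$, so at $v$ every one of the $q$ colors is either saturated (contributing $c_v$ edges) or lightly missing (contributing $c_v-1$ edges). Hence the total colored degree at $v$ equals $q c_v - m_v$. Combined with $d_v \le \Delta' c_v$ and the fact that $v$ is incident to at least one uncolored edge (since $V(O)$ is connected by uncolored edges), this yields $m_v \ge (q-\Delta')c_v + d_v^{\text{uncol}} \ge (q-\Delta')c^-$.

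\emph{Upper bound on $m_v$.} Because $v$ is a $\Delta'$-witness, each of the $m_v$ lightly missing colors at $v$ is non-free, and because $O$ is a hard orbit no two vertices of $V(O)$ share a lightly missing color. Thus each of these $m_v$ colors is saturated at every $u \in V(O)\setminus\{v\}$, contributing exactly $m_v\cdot c_u$ edges at $u$ from these colors alone. Since $V(O)$ is connected by uncolored edges, fix a spanning tree $T$ inside the uncolored subgraph; this tree has $|V(O)|-1$ edges and contributes $2(|V(O)|-1)$ to $\sum_{u \in V(O)} d_u^{\text{uncol}}$. Summing the degree inequality $m_v c_u + d_u^{\text{uncol}} \le d_u \le \Delta' c_u$ over the $|V(O)|-1$ vertices $u \in V(O)\setminus\{v\}$ and using that two of the $2(|V(O)|-1)$ tree-degree units are already absorbed in the lower-bound accounting at $v$ (the tree edge at $v$ contributes to $d_v^{\text{uncol}}$ on one side and to the matched endpoint on the other), yields an upper bound on $m_v$ whose gap against the lower bound is $\frac{2(|V(O)|-2)}{c^-}$.

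Combining the two bounds gives $(q-\Delta')c^- \le 2|V(O)|-4$, which rearranges to the claimed $q \le \Delta' + \frac{2|V(O)|-4}{c^-}$.

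The main technical obstacle is the tight bookkeeping of uncolored-edge contributions around the witness: a naive application of the spanning-tree bound only delivers a constant of $2(|V(O)|-1)$ instead of $2|V(O)|-4$. Achieving the sharper constant requires choosing the spanning tree so that $v$ is a leaf (maximizing the tree-degree budget available at vertices other than $v$) and carefully verifying that no uncolored-edge unit is double-counted between the lower-bound contribution at $v$ and the upper-bound contributions at its neighbors in $V(O)$.
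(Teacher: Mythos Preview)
Your lower bound on $m_v$ is fine: in a hard orbit every color at $v$ is either saturated or lightly missing, so the colored degree at $v$ is $qc_v-m_v$, and together with $d_v\le \Delta' c_v$ this gives $m_v\ge (q-\Delta')c_v\ge (q-\Delta')c^-$.

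The upper bound is where the argument breaks down. You state that the $m_v$ missing colors at $v$ are non-free, but you never actually use non-freeness; instead you only use that these colors are saturated at every other $u\in V(O)$, which follows from the hard-orbit property alone. The degree inequality you then sum, $m_v c_u + d_u^{\text{uncol}} \le d_u \le \Delta' c_u$, yields
\[
m_v \;\le\; \Delta' \;-\; \frac{\sum_{u\ne v} d_u^{\text{uncol}}}{\sum_{u\ne v} c_u},
\]
an upper bound of size roughly $\Delta'$, not $2|V(O)|-4$. Combining this with your lower bound gives something like $q\le \Delta'(1+1/c^-)$, which is a different (and for the purposes of the algorithm, useless) inequality. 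The spanning-tree bookkeeping and the talk of ``absorbed units'' cannot close this gap: no amount of shuffling uncolored-degree contributions will turn a $\Delta'$-sized upper bound into a $|V(O)|$-sized one.

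The missing idea is that the constant $2|V(O)|-4$ comes directly from the inductive construction of an edge orbit, and this is precisely where non-freeness matters. The seed has $2$ vertices and $0$ non-free colors; each extension adds an alternating path using two \emph{previously free} colors $a,b$ and at least one new vertex. Hence after $k$ extensions the orbit has at most $2k$ non-free colors and at least $2+k$ vertices, so the number of non-free colors is at most $2(|V(O)|-2)$. Since $v$ is a $\Delta'$-witness, every color missing at $v$ is non-free, whence $m_v\le 2|V(O)|-4$. Combining with your lower bound immediately gives $(q-\Delta')c^-\le 2|V(O)|-4$, which is the lemma.
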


\begin{lemma}\label{lemma:witness2}
Let $O$ be a hard orbit. If $O$ is a $\Gamma'$-witness then $q \leq \Gamma' + 2|V(O)| - 4 - \frac{2}{c^+}\,.$
\end{lemma}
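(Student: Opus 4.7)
The plan is to carry out a double-counting argument on the edges with both endpoints in $V(O)$. Write $n = |V(O)|$ and partition the $q$ available colors into the set of $f$ \emph{free} colors (those that appear on no edge of $O$) and the set of $u = q - f$ \emph{non-free} colors (those that appear on at least one edge of $O$). I will lower bound $|E(V(O))|$ using the witness hypothesis and upper bound it using the definition of $\Gamma'$, which will yield $q \le \Gamma' + 2n - 4 - 2/c^+$, where $c^+ = \max_{v \in V(O)} c_v$.

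First I would bound $u$. By construction, an edge orbit starts from a seed of two parallel uncolored edges (two vertices, no colored edges, so no contribution to the color count of $O$) and is extended by iteratively appending alternating paths $P$. Each such path is specified by exactly two colors and, by the inductive definition, adds at least one new vertex to $V(O)$. Consequently, at most $n - 2$ paths are ever appended, and the colors carried by edges of $O$ form the union of these per-path pairs, yielding $u \le 2(n-2)$.

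Next I would lower bound $|E(V(O))|$. The two seed edges are uncolored and lie in $E(V(O))$, contributing $2$. For each free color $c$, the $\Gamma'$-witness hypothesis says $c$ is \emph{full} in $O$, which by the stated equivalent form gives $|E_c \cap E(V(O))| \ge \lfloor (\sum_{v \in V(O)} c_v)/2 \rfloor$. For each non-free color $c$, at least one edge of $O$, and hence of $E(V(O))$, is colored $c$. Since these three contributions are over disjoint subsets of edges (uncolored vs.\ different color classes), summing gives
\[
|E(V(O))| \;\ge\; 2 \;+\; f \left\lfloor \tfrac{\sum_{v \in V(O)} c_v}{2} \right\rfloor \;+\; u.
\]
Combining with the upper bound $|E(V(O))| \le \Gamma' \lfloor (\sum_{v \in V(O)} c_v)/2 \rfloor$ coming directly from the definition of $\Gamma'$, and writing $L = \lfloor (\sum_{v \in V(O)} c_v)/2 \rfloor$ for brevity, I get $f \le \Gamma' - (2 + u)/L$, so
\[
q \;=\; f + u \;\le\; \Gamma' + u - \frac{2 + u}{L}.
\]

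Finally, I would plug in $u \le 2(n-2)$ (noting that the right-hand side is increasing in $u$ since $L \ge 1$) and use $L \le n c^+/2$ to bound the subtracted term from below: $(2n - 2)/L \ge 4(n-1)/(n c^+) \ge 2/c^+$ because $n \ge 2$. This yields $q \le \Gamma' + 2n - 4 - 2/c^+$, as claimed. The main obstacle is setting up the lower bound on $|E(V(O))|$ correctly: one must verify that the $f$ full free-color contributions, the $u$ non-free-color contributions, and the two uncolored seed edges are pairwise edge-disjoint subsets of $E(V(O))$ (automatic from being in different color classes and from the seed edges being uncolored), so that the counts add rather than merely combine; the rest is arithmetic.
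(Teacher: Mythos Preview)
Your argument is correct and is precisely the double-counting computation the paper's framework is set up for: bound the number $u$ of non-free colors by $2(|V(O)|-2)$ from the inductive growth rule of an edge orbit, lower bound $|E(V(O))|$ by $2 + fL + u$ using the $\Gamma'$-witness (full free colors), the two uncolored seed edges, and one edge per non-free color, and compare against $|E(V(O))|\le \Gamma' L$. The only point worth stating explicitly for the reader is that every edge of the edge orbit $O$ has both endpoints in $V(O)$ (so each non-free color really contributes to $E(V(O))$), which follows because each appended alternating path is absorbed into $V(O)$; once that is noted, your disjointness and the final estimate $(2n-2)/L \ge 4(n-1)/(nc^+)\ge 2/c^+$ for $n\ge 2$ finish the proof exactly as you wrote.
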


We now bound the size of $G_0$.

\begin{lemma} \label{lemma:size}
Let $O$ be a tight color orbit. Then $|V(O)| \leq \frac{q + 2}{q  - \Delta' + 2}\,.$ 
\end{lemma}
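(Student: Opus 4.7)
The plan is to convert the two defining properties of a tight color orbit into a linear inequality in $|V(O)|$, then close it using the global degree bound $d_v \le \Delta' c_v$ and the fact that $V(O)$ is connected through uncolored edges. Recall that a tight color orbit satisfies: (i) no vertex of $V(O)$ is strongly missing any color, and (ii) no two distinct vertices of $V(O)$ are lightly missing the same color.

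First I would derive a per-vertex counting identity. By property (i), at every $v \in V(O)$ each of the $q$ available colors is either saturated ($|E_c(v)| = c_v$) or lightly missing ($|E_c(v)| = c_v - 1$). Let $l_v$ denote the number of lightly missing colors at $v$ and let $r_v$ denote the number of uncolored edges incident to $v$. Counting colored edges at $v$ two ways yields
\[
d_v - r_v \;=\; (q - l_v)\,c_v + l_v(c_v - 1) \;=\; q c_v - l_v.
\]
Summing this identity over $V(O)$ and using property (ii), which forces $\sum_{v \in V(O)} l_v \le q$ (each of the $q$ colors is lightly missing at most once within the orbit), I get
\[
\sum_{v \in V(O)} d_v \;\ge\; q \sum_{v \in V(O)} c_v + \sum_{v \in V(O)} r_v - q.
\]

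Next I would feed in two elementary bounds. From the definition of $\Delta'$, $d_v \le \Delta' c_v$ for every $v$, so $\sum d_v \le \Delta' \sum c_v$, and combining with the previous display produces
\[
(q - \Delta') \sum_{v \in V(O)} c_v + \sum_{v \in V(O)} r_v \;\le\; q.
\]
Because $c_v \ge 1$, $\sum c_v \ge |V(O)|$. Because $V(O)$ is connected through uncolored edges, the subgraph induced by uncolored edges on $V(O)$ contains a spanning tree with $|V(O)|-1$ edges, each contributing $2$ to $\sum r_v$, so $\sum r_v \ge 2(|V(O)|-1)$. Substituting these two lower bounds (valid in the operating regime $q \ge \Delta'$ of the algorithm) and rearranging yields $(q - \Delta' + 2)\,|V(O)| \le q + 2$, which is the claimed inequality.

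The only step that genuinely requires care is the identity $d_v = q c_v - l_v + r_v$, which truly relies on property (i) — if strongly missing colors were allowed, extra slack would appear on the right-hand side and the whole chain would loosen. Once that accounting is nailed down, the remainder is routine and I do not expect any further obstacle.
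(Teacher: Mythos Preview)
Your argument is correct and is essentially the natural (and, as far as one can infer from the paper's setup and its reliance on the Sanders--Steurer framework, the intended) proof. The identity $d_v - r_v = qc_v - l_v$ is exactly the right consequence of ``no strongly missing colors,'' the bound $\sum_{v\in V(O)} l_v \le q$ is exactly the right consequence of ``no two vertices lightly miss the same color,'' and the two remaining substitutions ($\sum c_v \ge |V(O)|$ and $\sum r_v \ge 2(|V(O)|-1)$ via a spanning tree of uncolored edges) are the standard way to close the inequality; nothing is missing.
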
 

The following corollary follows from Lemma \ref{lemma:witness1}, \ref{lemma:witness2}, 
and Corollary \ref{cor:size}, 

\begin{corollary}\label{cor:cwitness}
If $q = \lfloor (1 + \epsilon) \Delta' \rfloor - 1$ and 
there is a witness then $q \leq OPT + \frac{2}{\epsilon } - 2$
\end{corollary}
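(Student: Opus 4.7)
The plan is to combine the two witness bounds from Lemmas \ref{lemma:witness1} and \ref{lemma:witness2} with the orbit-size bound from Lemma \ref{lemma:size}, and then substitute in the specific value of $q$ given in the hypothesis.

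First I would reduce the two cases to a single uniform estimate. Since $OPT \geq \Delta'$ and $OPT \geq \Gamma'$ (these are exactly the lower bounds $LB_1$ and $LB_2$), and since $c^-, c^+ \geq 1$, both Lemma \ref{lemma:witness1} and Lemma \ref{lemma:witness2} imply the common estimate
\[
q \;\leq\; OPT + 2|V(O)| - 4,
\]
obtained by dropping the $1/c^-$ factor in the former and the $-2/c^+$ term in the latter. Thus it suffices to show $|V(O)| \leq 1/\epsilon + 1$ under the hypothesis $q = \lfloor (1+\epsilon)\Delta' \rfloor - 1$.

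For the size bound I would feed the chosen $q$ into Lemma \ref{lemma:size}. Writing $\lfloor (1+\epsilon)\Delta' \rfloor = (1+\epsilon)\Delta' - f$ with $f \in [0,1)$, I obtain $q + 2 = (1+\epsilon)\Delta' + 1 - f$ and $q - \Delta' + 2 = \epsilon \Delta' + 1 - f$. A single cross-multiplication then reduces the target inequality
\[
\frac{q+2}{q - \Delta' + 2} \;\leq\; \frac{1+\epsilon}{\epsilon}
\]
to $\epsilon(1-f) \leq (1+\epsilon)(1-f)$, which is immediate since $1 - f > 0$. Hence $|V(O)| \leq 1/\epsilon + 1$, and substituting back into $q \leq OPT + 2|V(O)| - 4$ yields $q \leq OPT + 2/\epsilon - 2$, as required.

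The proof is essentially bookkeeping, so I do not anticipate a serious obstacle. The only subtlety is that the floor function in the definition of $q$ can shrink the denominator $q - \Delta' + 2$ by nearly one unit from the naive value $\epsilon \Delta' + 1$; however, the factor $(1-f)$ appears on both sides of the cross-multiplication and cancels cleanly, so the bound $(1+\epsilon)/\epsilon$ remains valid and tight even as $f \to 1^-$.
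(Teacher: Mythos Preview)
Your proposal is correct and follows precisely the route the paper indicates: combine the two witness bounds (Lemmas~\ref{lemma:witness1} and~\ref{lemma:witness2}) into the uniform estimate $q \le OPT + 2|V(O)| - 4$, then feed the hypothesis $q = \lfloor(1+\epsilon)\Delta'\rfloor - 1$ into the orbit-size bound of Lemma~\ref{lemma:size} to obtain $|V(O)| \le (1+\epsilon)/\epsilon$. Your treatment of the floor via the fractional part $f$ and the clean cancellation of $(1-f)$ in the cross-multiplication is exactly the bookkeeping needed, and the implicit use of $|V(O)|\ge 2$ (so that $2|V(O)|-4 \ge 0$ when dropping the $1/c^-$ factor) is justified since a hard orbit contains the two endpoints of its seed.
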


The following lemma provides a bound on the number of required colors for $G_0$.
\begin{lemma} \label{lemma:G0}
Suppose that the size of the largest component of $G_0$ is bounded by $C$. 
Then coloring $G_0$ requires at most $\lceil \frac{C-1}{c^-} \rceil  + 1$ colors.
\end{lemma}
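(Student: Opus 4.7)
The plan is to read off the bound directly from the three-step procedure already described in Section~\ref{sec:simple}, combined with the fact that every connected component of $G_0$ is a simple graph on at most $C$ vertices.

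First I would recall that the coloring procedure for $G_0$ splits each vertex $v$ into $c_v$ copies, balances the incident edges so that every copy has degree at most $\lceil d_v(G_0)/c_v \rceil$, applies Vizing's theorem to the resulting simple graph, and then contracts the copies back. As remarked in step~(2) of the procedure, Vizing's theorem uses at most $\Delta + 1$ colors on a simple graph, so the whole procedure uses at most
\begin{equation*}
\max_{v \in V(G_0)} \left\lceil \frac{d_v(G_0)}{c_v} \right\rceil + 1
\end{equation*}
colors, and after contraction each color class at $v$ still has at most $c_v$ edges (since copies of $v$ are independent in the simple coloring).

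Second, I would bound the numerator $d_v(G_0)$. By hypothesis the connected component containing $v$ has at most $C$ vertices, and $G_0$ is simple, so $d_v(G_0) \le C - 1$. Dividing by $c_v$ and using $c_v \ge c^-$ gives
\begin{equation*}
\left\lceil \frac{d_v(G_0)}{c_v} \right\rceil \;\le\; \left\lceil \frac{C-1}{c_v} \right\rceil \;\le\; \left\lceil \frac{C-1}{c^-} \right\rceil,
\end{equation*}
uniformly in $v$. Substituting into the bound from the first step yields $\lceil (C-1)/c^- \rceil + 1$ colors, as claimed.

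There is essentially no substantive obstacle here; the only thing to be careful about is that the splitting in step~(1) actually produces a \emph{simple} graph on the copies so that Vizing's theorem applies, and that the balanced distribution of edges gives the bound $\lceil d_v(G_0)/c_v \rceil$ on each copy rather than something larger. Both follow from the fact that each connected component of $G_0$ is simple and that we are free to distribute $d_v(G_0)$ edges as evenly as possible among $c_v$ copies. After that, the lemma is immediate from the simple-graph degree bound $d_v(G_0) \le C-1$.
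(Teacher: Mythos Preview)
Your proposal is correct and follows exactly the route the paper intends: it combines the procedure of Section~\ref{sec:simple}, which already yields the bound $\max_v \lceil d_v(G_0)/c_v \rceil + 1$, with the trivial degree bound $d_v(G_0) \le C-1$ coming from $G_0$ being simple with components of size at most $C$. Your remark that the split graph remains simple (so Vizing applies) is the only point requiring care, and it holds because $G_0$ itself has no parallel edges.
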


\begin{theorem}
Given a transfer graph $G$, we can compute a coloring of the edges using at most $OPT +  O\left(\sqrt{OPT}\right)$
colors.
\end{theorem}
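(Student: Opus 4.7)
The plan is to combine the machinery from Lemmas \ref{lemma:witness1}, \ref{lemma:witness2}, \ref{lemma:size}, and \ref{lemma:G0} by parametrizing the initial palette size and balancing the two sources of extra colors: the colors introduced during Phase 1 when witnesses are discovered, and the colors needed to finish $G_0$ in Phase 2.

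First I would fix a parameter $\epsilon > 0$ and run the algorithm starting with $q_0 = \lfloor (1+\epsilon) \Delta' \rfloor - 1$ colors. Phase 1 only increases the palette when it encounters a $\Delta'$- or $\Gamma'$-witness, and Corollary \ref{cor:cwitness} bounds the palette size $q$ at any such witness event by $OPT + 2/\epsilon' - 2$, where $\epsilon'$ is the value tied to the current $q$. Since $q$ only grows during the algorithm, $\epsilon'$ is nondecreasing and hence $\epsilon' \ge \epsilon$ throughout. It follows that the final Phase 1 palette $q_1$ satisfies $q_1 \le OPT + 2/\epsilon - 1$ (accounting for the last color added just after the final witness).

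Next, Phase 1 outputs a simple uncolored subgraph $G_0$, each of whose connected components is a tight color orbit. By Lemma \ref{lemma:size}, every such component has size at most $C \le (q_1 + 2)/(q_1 - \Delta' + 2)$. Using $q_1 \ge (1+\epsilon)\Delta' - 2$, the denominator is at least $\epsilon \Delta'$; combined with $q_1 \le OPT + O(1/\epsilon)$ and $\Delta' \le OPT$, this yields $C = O(1/\epsilon)$ whenever $\epsilon \ge 1/\sqrt{OPT}$. Applying Lemma \ref{lemma:G0}, Phase 2 colors $G_0$ using at most $\lceil (C-1)/c^- \rceil + 1 = O(1/\epsilon)$ additional colors (since $c^- \ge 1$).

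Adding everything up, the total number of colors used is at most $OPT + 2/\epsilon + O(1/\epsilon)$. Balancing by setting $\epsilon = \Theta(1/\sqrt{OPT})$ yields the claimed $OPT + O(\sqrt{OPT})$ bound. The main obstacle I anticipate is the bookkeeping around repeated witness events inside Phase 1: one must verify that the corollary can be invoked uniformly at every witness event (not just the first) without the bound degrading, which reduces to observing that the associated $\epsilon'$ sequence is nondecreasing so the weakest constraint is the one coming from the initial palette. A secondary subtlety is ensuring the size-of-components bound remains $O(1/\epsilon)$ in the regime where $\Delta'$ is much smaller than $OPT$ (i.e., when $\Gamma'$ dominates), but in that case $q_1 - \Delta'$ is itself $\Theta(OPT)$, so $C$ becomes even smaller and the same overall estimate holds.
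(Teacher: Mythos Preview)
Your overall architecture is exactly the paper's: run Phase~1 with a padded palette of size roughly $(1+\epsilon)\Delta'$, invoke Corollary~\ref{cor:cwitness} at every witness event, bound the component size of $G_0$ via Lemma~\ref{lemma:size}, finish with Lemma~\ref{lemma:G0}, and optimize $\epsilon$. The observation that the effective $\epsilon'$ is nondecreasing, so the weakest witness bound is the one at the initial palette, is also correct and is precisely the bookkeeping the paper sweeps into Corollary~\ref{cor:cwitness}.

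There is, however, a real accounting gap in your final tally. You write that the total number of colors is at most $OPT + 2/\epsilon + O(1/\epsilon)$ and then ``balance'' by taking $\epsilon = \Theta(1/\sqrt{OPT})$. But as written there is nothing to balance: your expression is monotone decreasing in $\epsilon$, so the minimizer would be $\epsilon$ constant, giving an impossible $OPT + O(1)$ bound. What you have dropped is the overhead already baked into the \emph{initial} palette $q_0 = \lfloor (1+\epsilon)\Delta' \rfloor - 1$. If Phase~1 never sees a witness, the Phase~1 palette is exactly $q_0$, and $q_0 - OPT$ can be as large as $\epsilon\Delta' \le \epsilon\,OPT$; the witness bound $q_1 \le OPT + 2/\epsilon - 1$ simply does not apply in that branch. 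The correct summary is
\[
q_1 \;\le\; \max\bigl(\,OPT + \epsilon\,\Delta',\ OPT + 2/\epsilon - 1\,\bigr),
\]
and together with the $O(1/\epsilon)$ Phase~2 colors the total is $OPT + \max(\epsilon\,OPT,\,O(1/\epsilon))$. Now the balancing $\epsilon = \Theta(1/\sqrt{OPT})$ is forced and yields the stated $OPT + O(\sqrt{OPT})$. Once you restore this missing term, your argument is complete and coincides with the paper's.
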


\begin{corollary}
The coloring algorithm uses at most $OPT +  O(\sqrt{OPT})$ colors, which implies an approximation factor of $1 + o(1)$ as  $OPT$ increases.
\end{corollary}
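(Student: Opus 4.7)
The corollary is an essentially immediate consequence of the preceding Theorem, so my plan is short. First I would invoke the Theorem to obtain a coloring using at most $OPT + c\sqrt{OPT}$ colors for some absolute constant $c > 0$. Then I would rewrite the approximation ratio as
\[
\frac{OPT + c\sqrt{OPT}}{OPT} \;=\; 1 + \frac{c}{\sqrt{OPT}},
\]
and observe that $c/\sqrt{OPT} \to 0$ as $OPT \to \infty$. By the definition of little-$o$ notation, this means the ratio equals $1 + o(1)$ in the relevant asymptotic regime, which is exactly what the corollary asserts.

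The real content lies not in the corollary itself but in the Theorem it cites, which is the genuine obstacle. To prove that, I would run Phase 1 with the color budget $q = \lfloor (1+\epsilon)\Delta' \rfloor - 1$, where $\epsilon$ is a parameter to be tuned. The casework then splits as follows: if at any point Phase 1 produces a $\Delta'$- or $\Gamma'$-witness, Corollary~\ref{cor:cwitness} already bounds the total colors used by $OPT + 2/\epsilon - 2$, and we are done. Otherwise, Phase 1 terminates cleanly with $G_0$ simple, and Lemma~\ref{lemma:size} controls the size of each connected component of $G_0$ by $(q+2)/(q - \Delta' + 2) = O(1/\epsilon)$; feeding this into Lemma~\ref{lemma:G0} shows Phase 2 contributes at most $O(1/(\epsilon c^-))$ additional colors.

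Combining the two phases yields a bound of the form $OPT + O(\epsilon \Delta') + O(1/\epsilon)$. The final step is to balance these two terms by choosing $\epsilon = \Theta(1/\sqrt{\Delta'})$, which optimizes the sum to $OPT + O(\sqrt{\Delta'})$. Since $\Delta'$ is a lower bound on $OPT$, this gives the claimed $OPT + O(\sqrt{OPT})$ of the Theorem, from which the corollary follows immediately by the short calculation above. The main subtlety throughout is ensuring that the orbit-growing procedure of Phase 1 always makes measurable progress in each subcase (color an uncolored edge, uncolor a lean edge and color a bad one, or grow the edge orbit), so that the algorithm terminates in polynomial time while respecting the above color accounting.
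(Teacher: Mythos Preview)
Your derivation of the corollary from the theorem is correct and is exactly the intended (one-line) argument: divide the bound $OPT + c\sqrt{OPT}$ by $OPT$ and observe that $c/\sqrt{OPT}\to 0$.

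Your additional sketch of the theorem itself is also along the right lines---set $q=\lfloor(1+\epsilon)\Delta'\rfloor-1$, use the witness lemmas and Lemma~\ref{lemma:size} to bound the two contributions, and balance by taking $\epsilon=\Theta(1/\sqrt{\Delta'})$. One small inaccuracy worth flagging: you write that ``if at any point Phase~1 produces a witness, Corollary~\ref{cor:cwitness} already bounds the total colors used \ldots\ and we are done.'' That is not quite how the accounting works. A witness only certifies that the \emph{current} value of $q$ is close to $OPT$; the algorithm then increments $q$ and may encounter further witnesses later. The actual bound on the final number of colors comes from applying the witness inequalities at the \emph{last} increment (or, equivalently, from observing that once $q$ exceeds $OPT+2/\epsilon$ no further witness can arise), not from a single witness event terminating Phase~1. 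This does not change the $O(1/\epsilon)$ shape of the bound, so your balancing step and the conclusion $OPT+O(\sqrt{OPT})$ are unaffected, but the sentence as written overstates what a single invocation of Corollary~\ref{cor:cwitness} gives you.
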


%
\bibliographystyle{plain}
\bibliography{TR_SoftEdge}
 
\end{document}